\documentclass[10pt,twocolumn]{IEEEtran}
\IEEEoverridecommandlockouts

\usepackage{amsthm, amssymb}

\def\bbbb{\mathbb}

\def\ccc{\mathcal}
\def\sinr{\textrm{SINR}}
\def\snr{\textrm{SNR}}

\def\dint{{\rm \ d}}

\newtheoremstyle{slplain}
  {3pt}
  {3pt}
  {\slshape}
  {}
  {\bfseries}
  {.}%
  { }
  {}

\usepackage{color}
\theoremstyle{slplain}

\newtheorem{cor}{Corollary}
\newtheorem{lem}{Lemma}
\newtheorem{pro}{Proposition}

\usepackage{cite}
\usepackage{amsfonts}
\usepackage{multicol,multienum}
\usepackage{subfigure}
\usepackage{multirow}

\ifCLASSINFOpdf
  \usepackage[pdftex]{graphicx}
  \graphicspath{{../pdf/}{../jpeg/}}
  \DeclareGraphicsExtensions{.pdf,.jpeg,.png}
\else
  \usepackage{float}
  \usepackage[dvips]{graphicx}
  \graphicspath{{../eps/}}
  \DeclareGraphicsExtensions{.eps}
\fi

\usepackage[cmex10]{amsmath}
\usepackage{algorithm}
\usepackage{algorithmic}
\usepackage{array}
\usepackage{url}
\usepackage{setspace}
\usepackage{cite}

\hyphenation{op-tical net-works semi-conduc-tor}

\begin{document}

\title{Spectrum Sharing for Device-to-Device Communication in Cellular Networks}

\author{
\IEEEauthorblockN{Xingqin Lin, Jeffrey G. Andrews and Amitabha Ghosh}
\thanks{Xingqin Lin and Jeffrey G. Andrews are with Department of Electrical $\&$ Computer Engineering, The University of Texas at Austin, USA. (E-mail: xlin@utexas.edu, jandrews@ece.utexas.edu). Amitabha Ghosh is with Nokia. (E-mail: amitava.ghosh@nsn.com). 

This research was supported by Nokia Siemens Networks and the National Science Foundation grant CIF-1016649. A part of this paper was presented at IEEE Globecom 2013 in Atlanta, GA \cite{lin2013optimal}. 
}
}

\maketitle

\begin{abstract}
This paper addresses two fundamental and interrelated issues in device-to-device (D2D) enhanced cellular networks. The first issue is how D2D users should access spectrum, and we consider two choices: overlay (orthogonal spectrum between D2D and cellular UEs) and underlay (non-orthogonal). The second issue is how D2D users should choose between communicating directly or via the base station, a choice that depends on distance between the potential D2D transmitter and receiver.  We propose a tractable hybrid network model where the positions of mobiles are modeled by random spatial Poisson point process, with which we present a general analytical approach that allows a unified performance evaluation for these questions. Then, we derive analytical rate expressions and apply them to optimize the two D2D spectrum sharing scenarios under a weighted proportional fair utility function. We find that as the proportion of potential D2D mobiles increases, the optimal spectrum partition in the overlay is almost invariant (when D2D mode selection threshold is large) while the optimal spectrum access factor in the underlay decreases. Further, from a coverage perspective, we reveal a tradeoff between the spectrum access factor and the D2D mode selection threshold in the underlay: as more D2D links are allowed (due to a more relaxed mode selection threshold), the network should actually make less spectrum available to them to limit their interference.
\end{abstract}

\begin{IEEEkeywords}
Device-to-device communication, spectrum sharing, mode selection, cellular networks, stochastic geometry.
\end{IEEEkeywords}



\section{Introduction}

Device-to-device (D2D) networking allows direct communication between cellular mobiles, thus bypassing the base stations (BS). D2D opens up new opportunities for proximity-based commercial services, particularly social networking applications \cite{Corson2012toward, 3gppD2D}. Other use cases include public safety, local data transfer and data flooding\cite{3gppD2D}. Further, D2D may bring benefits such as increased spectral efficiency, extended cellular coverage, improved energy efficiency and reduced backhaul demand \cite{Fodor2012Design, lin2013overview}. 


\subsection{Related Work and Motivation}

The idea of incorporating D2D communication in cellular networks, or more generally, the concept of hybrid network consisting of both infrastructure-based and ad hoc networks has long been a topic of considerable interest. In earlier studies D2D was mainly proposed for relaying purposes \cite{Ying-Dar2000Multihop, Hongyi2001Integrate}. By allowing radio signals to be relayed by mobiles in cellular networks, it was shown that the coverage and throughput performance can be improved \cite{Ying-Dar2000Multihop, Hongyi2001Integrate}. 
More recently, D2D in cellular networks has been motivated by the trend of proximity-based services and public safety needs \cite{3gppD2D, lin2013overview}. In particular, Qualcomm publicized the necessity of developing a tailored wireless technology to support D2D in cell phones and has also built a D2D demonstration system known as FlashLinQ \cite{wu2010flashlinq, baccelli2012design}. Now D2D is being studied and standardized by the 3rd Generation Partnership Project (3GPP). In particular, potential D2D use cases have been identified in \cite{3gppD2D}; and a new 3GPP study item focusing on the radio access aspect was agreed upon at the December 2012 RAN plenary meeting \cite{3gppD2D2}. Through the most recent 3GPP meetings, initial progress on D2D evaluation methodology and channel models has been made \cite{3gppRAN73}, and broadcast communication is the current focus of D2D study in 3GPP \cite{3gppRAN76}.

In parallel with the standardization effort in industry, basic research is being undertaken to address the many fundamental problems in supporting D2D in cellular networks. One fundamental issue is how to share the spectrum resources  between cellular and D2D communications. Based on the type of spectrum sharing, D2D can be classified into two types: \textit{in-band} and \textit{out-of-band}. In-band refers to D2D using the cellular spectrum, while out-of-band refers to D2D utilizing bands (e.g. 2.4GHz ISM band) other than the cellular band. In-band D2D can be further classified into two categories: \textit{overlay} and \textit{underlay}. Overlay means that cellular and D2D transmitters use orthogonal time/frequency resources, while underlay means that D2D transmitters opportunistically access the time/frequency resources occupied by cellular users.

Existing research relevant to this paper includes spectrum sharing in cognitive radio networks, where secondary cognitive transmitters may access the primary spectrum if the primary transmitters are not active or they do not cause unacceptable interference \cite{akyildiz2008survey}. For example, to protect the primary users, secondary transmissions in \cite{kang2009sensing, song2013optimal}  are regulated by sensing the activities of primary transmissions, while \cite{huang2008opportunistic} imposes stringent secondary access constraints on e.g. collision probability. Multi-antenna techniques are used in \cite{zhang2008exploiting, hamdi2009opportunistic, zhang2011optimal} to minimize secondary interference to primary network. Auction mechanisms are used in \cite{wang2010spectrum, han2011repeated} to control the spectrum access of secondary network.  More recently, the economic aspects of spectrum sharing in cognitive radio networks have gained much interest. For example, \cite{niyato2008competitive} adopts a dynamical game approach to study the spectrum sharing among a primary user and multiple secondary users. Similarly, a three-stage dynamic game is formulated in \cite{duan2011duopoly} to study spectrum leasing and pricing strategies, while  \cite{kalathil2012spectrum} designs incentive schemes for spectrum sharing  with cooperative communication.

Unlike spectrum sharing in cognitive radio networks, D2D spectrum sharing is controlled by the cellular network. How D2D should access the spectrum is a largely open question, though some initial results exist (see e.g. \cite{huang2009spectrum, xu2010effective, yu2011resource, kaufman2013spectrum, ye2013resource}). D2D spectrum sharing is further complicated by D2D \textit{mode selection} which means that a potential D2D pair can switch  between direct and conventional cellular communications \cite{Fodor2012Design, lin2013overview}.   Determining an optimum D2D mode selection threshold -- which we define as the Tx-Rx distance under which D2D communication should occur -- is another objective of this paper.

Note that D2D is different from ad hoc networks whose analysis and design are notoriously difficult (see e.g. \cite{gupta2000capacity, goldsmith2002design, weber2012TC}). A key difference is that D2D networking can be assisted by the cellular network infrastructure which is not available to a typical ad hoc network \cite{lin2013overview, lin2013multicast}. Nevertheless, supporting D2D communication requires a lot of new functionalities \cite{3gppD2D,3gppD2D2, lin2013overview} and significantly complicates the cellular network design. The main objectives of this paper are to provide a tractable baseline model for D2D-enabled cellular networks and to develop a unified analytical framework for the analysis and design of D2D spectrum sharing. This goal is fairly ambitious since D2D spectrum sharing scenarios are quite diverse. The main tool used in this paper is stochastic geometry, particularly the Poisson point processes (PPP). Note that  the PPP model for BS locations has been recently shown to be about as accurate in terms of both signal-to-interference-plus-noise ratio (SINR) distribution and handover rate as the hexagonal grid for a representative urban cellular network \cite{andrews2011tractable, lin2012towards}. 

\subsection{Contributions and Outcomes}

The main contributions and outcomes of this paper are as follows.

\subsubsection{A tractable hybrid network model}
In Section \ref{sec:system}, we introduce a hybrid network model, in which the random and unpredictable spatial positions of mobile users are modeled as a PPP. This model captures many important characteristics of D2D-enabled cellular networks including D2D mode selection, transmit power control and orthogonal scheduling of cellular users within a cell. 

\subsubsection{A unified performance analysis approach}

In Section \ref{sec:unified}, we present a general analytical framework. With this approach, a unified performance analysis is conducted for two D2D spectrum sharing scenarios: overlay and underlay in-band D2D. In particular, we derive analytical rate expressions and apply them to optimize spectrum sharing parameters.

\subsubsection{Design insights}

The following observations are made from the derived analytical and/or numerical results under the model studied in this paper and may be informative for system design. 

\textbf{Overlay vs. underlay.} We evaluate the rate performance in both overlay and underlay scenarios. We observe that D2D mobiles can enjoy much higher data rate than regular cellular mobiles in both scenarios. As for cellular mobiles in the overlay case, their rate performance also improves due to the offloading capability of D2D communication. In contrast, the rate performance of cellular mobiles in the underlay case does not improve or even slightly degrades with D2D communication.\footnote{Note that the underlay study in this paper assumes that D2D randomly accesses the cellular spectrum. With carefully designed dynamic scheduling in the underlay, the rate of cellular mobiles may also increase, and the rate of D2D mobiles may further increase.} This is because cellular mobiles suffer from interference caused by the underlaid D2D transmissions, which offsets the D2D offloading gain. 
From an overall mean-rate (averaged across both cellular and D2D mobiles) perspective, both overlay and underlay provide performance improvement (vs. pure cellular). 

\textbf{D2D mode selection.} We derive the optimal D2D mode selection threshold that minimizes the transmit power. We find that the optimal threshold is inversely proportional to the square root of BS density and monotonically increases with the pathloss exponent. Moreover, it is invariant with the distance distribution of potential D2D pairs. D2D mode selection and spectrum sharing may be jointly optimized from e.g. the rate perspective. From a coverage perspective, we reveal a tradeoff between the D2D spectrum access and mode selection threshold in the underlay case: as more D2D links are allowed (due to a more relaxed mode selection threshold), the network should actually make less spectrum available to them to limit their interference.

The rest of this paper is organized as follows. Section \ref{sec:system} describes the system model and performance metrics. Preliminary analytical results are presented in Section \ref{sec:unified}. Two D2D spectrum sharing scenarios, overlay and underlay, are analyzed and optimized in Sections \ref{sec:overlay} and \ref{sec:underlay}, respectively.  Section \ref{sec:case}  presents a case study of overlay vs. underlay, and is followed by our concluding remarks in Section \ref{sec:concludsions}.

\section{System Model and Performance Metrics}
\label{sec:system}

\subsection{Network Model}
\label{subsec:network}

As shown in Fig. \ref{fig:8}, we consider a hybrid network consisting of both cellular and D2D links and focus on the uplink. The BSs are regularly placed according to a hexagonal grid. Denoting by ${1}/{\lambda_\textrm{b}}$ the area of a hexagonal cell, $\lambda_\textrm{b}$ can be regarded as the average number of BSs per unit area. The transmit user equipments (UE) are randomly distributed and modeled by an independently \textit{marked} PPP 
denoted as
\begin{align}
\tilde{\Phi} = \{ (X_{i},  \delta_{i}, L_{i}, P_{i}  )  \}.  
\end{align}
Here $\{ X_{i} \}$ denote the spatial locations of the UEs. Denote by $\Phi \in \bbbb R^2$ the \textit{unmarked} PPP $\{ X_{i} \}$ with  $\lambda$ being its intensity. $\{ \delta_{i}\}$ denote the types of the UEs and are assumed to be i.i.d. Bernoulli random variables with $\bbbb P(\delta_i=1) = q \in [0,1]$. In particular, UE $i$ is called a \textit{potential} D2D UE\footnote{It is called \textit{potential} D2D UE as a UE with D2D traffic can use either cellular or D2D mode.} if $\delta_i=1$; otherwise, it is called a \textit{cellular} UE. So, $q$ is a simple  indicator of the load of potential D2D traffic.  $\{ L_{i}\}$ denote the lengths of radio links. For notational simplicity, denote by $L_{\textrm{c}}$ (resp. $L_{\textrm{d}}$) the generic random variable for the link length of a typical cellular UE  (resp. potential D2D UE). $\{ P_{i}\}$ denote the transmit powers of UEs. In this paper we use \textit{channel inversion} for power control, i.e., $P_i = L_i^{\alpha}$, where $\alpha > 2$ denotes the pathloss exponent; extension to distance-proportional fractional power control (i.e., $P_i = L_i^{\alpha \epsilon}$ where $\epsilon \in [0,1]$) is straightforward.
Similarly, we use $P_{\textrm{c}}$ and $P_{\textrm{d}}$ to denote the generic random variables for the transmit powers of cellular and potential D2D UEs, respectively.

\begin{figure}
\centering
\includegraphics[width=8cm]{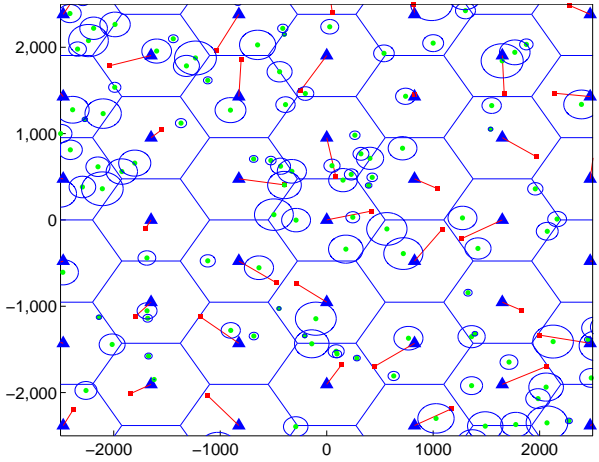}
\caption{A hybrid network consisting of both cellular and D2D links.  Solid triangles, solid squares and dots denote BSs, uplink cellular transmitters and D2D transmitters, respectively. For clarity we omit plotting D2D receivers, each of which is randomly located on the circle centered at the associated D2D transmitter.}
\label{fig:8}
\end{figure}

\textbf{Remark on channel inversion.}
Note that channel inversion in this paper only compensates for the large-scale pathloss. In particular, it does \textit{not} compensate for the small-scale fading. This channel inversion scheme has two advantages: 1) it does not lead to excessively large transmit power when the link is poor (due to the small-scale fading), and 2) the transmitter only needs a long-term statistic (i.e. pathloss) to decide its transmit power, i.e., instantaneous channel state information is not required to be available at the transmitter as small-scale fading is not compensated for. Note that for ease of exposition, we assume in the analysis that the average received power is $1$ due to channel inversion, i.e., $P_i=L_i^{\alpha}$. In other words, $P_i$ should be considered as \textit{virtual} transmit power, and should be scaled appropriately to map to the \textit{actual} transmit power $\tilde{P}_i$, say, $\tilde{P}_i = \rho P_i$, where $\rho$ is the coefficient of proportionality. Normally, $\rho \ll 1$ since the practical transmit power of wireless devices is far less than the pathloss. 

Next, let us introduce the notation $\snr_m$ to denote the average \textit{received} signal power normalized by noise power, i.e.,
\begin{align}
\snr_m = \frac{\tilde{P} L^{-\alpha}    }{\tilde{N}_0 B_w} = \frac{\rho P L^{-\alpha}    }{\tilde{N}_0 B_w}  =  \frac{1}{\rho^{-1} \tilde{N}_0 B_w},
\end{align}
where $\tilde{N}_0$ denotes the one-sided power spectral density of the additive white Gaussian noise, and $B_w$ denotes the channel bandwidth. In the rest of this paper, if the average received power is normalized to $1$, we use $N_0$ to denote the \textit{equivalent} noise power $\rho^{-1} \tilde{N}_0 B_w$. By choosing the operating regime $\snr_m$ (or equivalently, the coefficient $\rho$) appropriately, we can make sure that the UE power constraints are satisfied and thus there is no need to truncate UE transmit power to meet the peak power constraint. We will give more detailed results in Section \ref{subsec:power} to illustrate the above argument.

The potential of D2D will largely depend on the amount of local traffic that may be routed through local direct paths, instead of infrastructure paths. One possible approach to model ``data localization'' would be based on current user traffic statistics. However, it appears very challenging to acquire such traffic data, which is typically owned by operators and contains sensitive and proprietary information. Even if the current traffic data could be obtained from the operators, it might not be too useful, since presumably D2D's availability could change future traffic patterns. For example, once users realize high D2D speeds are possible, more local sharing is likely to occur. So far, no commonly agreed upon D2D distance distribution has appeared in the literature. 
In the absence of such an accepted model, we assume that each potential D2D receiver is randomly and independently placed around its associated potential D2D transmitter with isotropic direction and Rayleigh distributed distance $D$ with
probability density function (PDF) given by
\begin{align}
f_D ( x ) = 2\pi \xi x e^{ -\xi \pi x^2 }, \quad x\geq 0. \label{eq:101}
\end{align}
In other words, the potential D2D receiver is randomly placed around its associated potential D2D transmitter according to a two-dimensional Gaussian distribution, which results in (\ref{eq:101}). A similar Gaussian assumption has also been used in \cite{baccelli2012optimizing}  to analyze the performance of FlashLinQ. The analysis and calculations in this paper can be used to study other D2D distance distributions as well.


In this paper, we consider \textit{distance-based D2D mode selection}: cellular mode is used if $D \geq \mu$; otherwise, D2D mode is selected. If we assume that the received signal power (averaged over fast fading) is only a function of distance and pathloss exponent, distance-based D2D mode selection is equivalent to the average received-signal-power or SNR-based mode selection, to which the results in this paper can be directly applied.

\subsection{Spectrum Sharing}

The spectrum sharing models studied in this paper are 
described as follows.

\subsubsection{Overlay in-band D2D}
The uplink spectrum is divided into two orthogonal portions. A fraction $\eta$ is assigned to D2D communication while the other $1-\eta$ is used for cellular communication. We term $\eta$ \textit{spectrum partition factor} in the overlay.

\subsubsection{Underlay in-band D2D}
We assume that each D2D transmitter uses frequency hopping to randomize its interference to other links. Specifically, we divide the channel into $B$ subchannels. Each D2D transmitter may randomly and independently access $\beta B$ of them, where the factor $\beta \in [0,1]$ measures the aggressiveness of D2D spectrum access. We term $\beta$ \textit{spectrum access factor} in the underlay.


Though we examine D2D spectrum sharing from the frequency domain perspective, it is straightforward to interpret the derived results in this paper from a two-dimensional time-frequency perspective. Take the overlay for example: the equivalent interpretation is that a proportion $\eta$ of the OFDMA (orthogonal frequency-division multiplexing access) resource blocks is assigned to D2D while the remaining resource blocks are used by cellular. 

\subsection{Transmission Scheduling}
\label{subsec:transmission}

Cellular transmitters including cellular UEs and potential D2D UEs in cellular mode  form a PPP $\Phi_{c}$ with intensity $\lambda_{\textrm{c}} = (1-q)\lambda + q \lambda \bbbb P(D \geq \mu) $. We assume an orthogonal multiple access technique and that uplink transmitters are scheduled in a round-robin fashion. It follows that only one uplink transmitter in each macrocell can be active at a time.  Generally speaking,  scheduling cellular transmitters in an  orthogonal manner leads to \textit{dependent} thinning of PPP $\Phi_{\textrm{c}}$.  This makes the analysis intractable and some simplified assumptions are needed (see e.g. \cite{novlan2012analytical}). In this paper, denoting by $\mathcal{A}$ the coverage region of a hexagonal macrocell, we approximate  $\mathcal{A}$ by a disk that has the same area as the hexagonal cell, i.e., $\mathcal{A} = \mathcal{B} (0, R)$ where $\mathcal{B} (x, r)$ denotes the ball centered at $x$ with radius $r$ and $R = \sqrt{\frac{1}{\pi \lambda_{\textrm{b}}}}$. To avoid triviality, we assume $\lambda_{\textrm{c}} \geq \lambda_{\textrm{b}}$, which is reasonable as the uplink transmitter density is usually larger than the BS density.  Further, we assume that the typical active cellular transmitter is uniformly distributed in the coverage region $\mathcal{A}$, and that the locations of cellular interferers form a PPP $\Phi_{\textrm{c,a}}$ with intensity $\lambda_{\textrm{b}}$. For the typical uplink transmission, cellular interferers are located outside the region $\mathcal{A}$. Fig. \ref{fig:5} illustrates the proposed approximate interference analysis for a typical uplink transmission. Due to the use of this approximation, the analytical results about cellular performance derived in this paper are approximations; for notation simplicity, we will present the results as equalities instead of the more cumbersome approximations in the sequel. The approximation will be numerically validated later.

\begin{figure}
\centering
\includegraphics[width=8cm]{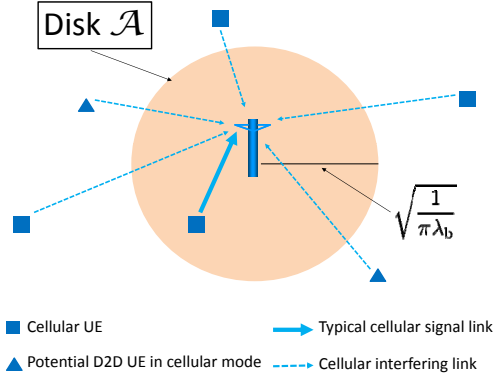}
\caption{An approximate uplink interference analysis. The typical cellular transmitter is uniformly distributed in $\mathcal{A}$, while cellular interferers form a PPP with density $\lambda_{\textrm{b}}$ outside the disk $\mathcal{A}$.}
\label{fig:5}
\end{figure}

As for potential D2D UEs in D2D mode, they form a PPP $\Phi_{\textrm{d}}$ with intensity $\lambda_{\textrm{d}} = q \lambda \bbbb P(D < \mu)$.  For D2D medium access control, we consider a simple spatial Aloha access scheme: in each time slot each potential D2D UE in D2D mode transmits with probability $\kappa$ and is silent with probability $1 - \kappa$; the activity decisions are independently made over different time slots and different transmitters. The study of this simple baseline medium access scheme can serve as a benchmark for future work on more sophisticated D2D scheduling schemes, e.g., carrier sense multiple access (CSMA) or centralized scheduling.

\subsection{Performance Metrics}
\label{subsec:metric}

We will analyze the average rates of cellular and potential D2D UEs, $T_{\textrm{c}}$ and $T_{\textrm{d}}$.  Recall that potential D2D UEs can use either cellular or D2D mode. Denote by $\hat{T}_{\textrm{d}}$ the average rate of potential D2D UEs in D2D mode. Conditioning on using cellular mode, the rate $T_{\textrm{d}}$ of potential D2D UEs equals the  rate $T_{\textrm{c}}$ of cellular UEs; conditioning on using D2D mode, the rate $T_{\textrm{d}}$ of potential D2D UEs equals $\hat{T}_{\textrm{d}}$ by definition. Under the assumed distance-based D2D mode selection, a typical potential D2D UE uses cellular mode with probability $\bbbb P(D \geq \mu )$ and D2D mode with probability $\bbbb P(D < \mu )$. To sum up,  the average rate of potential D2D UEs can be written as
\begin{align}
T_{\textrm{d}}  = \bbbb P(D \geq \mu) \cdot T_{\textrm{c}}  + \bbbb P(D < \mu)  \cdot \hat{T}_{\textrm{d}} .
\end{align}

\section{Preliminary Analysis}
\label{sec:unified}

In this section we present preliminary analytical results, which lay the foundation for the study of overlay and underlay in-band D2D in the next two sections. 

\subsection{A Unified Analytical Approach}

Consider a typical transmitter and receiver pair interfered by  $K$ types of heterogeneous interferers. The set of the $k$-th type of interferers is denoted as $\mathcal{M}_k$. In this paper, we focus on frequency-flat narrowband channels; the results can be readily extended to OFDM-based frequency-selective wideband channels, each of which can conceptually be regarded as a set of parallel frequency-flat narrowband channels.

The baseband received signal $Y_0 [n]$ at the typical receiver located at the origin can be written as
\begin{align}
Y_0 [n] = & \sqrt{P_{0}  L_{0}^{-\alpha} G_0 }  S_{0} [n] \notag \\
&+ \sum_{k =1}^{K} \sum_{i \in \mathcal{M}_{k}} \sqrt{P_{i} \|X_i\|^{-\alpha} G_{i}}    S_{i} [n] + Z [n], 
\label{eq:1}
\end{align}
where $P_{0},  L_{0}, G_0$ and $S_0[n]$ are associated with the typical link and denote the typical link's  transmit power, length, channel fading and unit-variance signal, respectively; 
$P_{i},  \|X_i\|, G_{i}$ and $S_{i} [n]$ are associated with the interfering link from transmitter $i$ to the typical receiver and denote the interfering link's transmit power, length, channel fading and unit-variance signal, respectively; $Z [n]$  is additive white Gaussian noise with constant PSD $\tilde{N}_0$ Watts/Hz. It follows that the received SINR is given by
$
\sinr = \frac{W}{I + \tilde{N}_0  B_w },
$
where signal power $W = P_{0}  L_{0}^{-\alpha}  G_0$, and interference power $I=\sum_{k =1}^{K} \sum_{i \in \mathcal{M}_{k}} P_{i} \|X_i\|^{-\alpha}  G_i$.

In this paper, recall we assume channel inversion, i.e., $P_0  L_0^{-\alpha} = 1$, and thus $W = G_0$. 
For simplicity we consider Rayleigh fading, i.e., $G_0 \sim \textrm{Exp} (1)$, and assume independent fading over space. Then
the following corollary will be particularly useful. 
\begin{cor}
Suppose $\sinr = \frac{W}{I + N_0 }$, where $W \sim \textrm{Exp} (1)$ and $I$ respectively denote the (random) signal and interference powers, and $N_0$ denotes the \textit{equivalent} noise power. If $W$ and $I$ are independent, 
\begin{align}
\bbbb E \left[ \log (1+ \sinr ) \right] 
&= \int_0^\infty   \frac{ e^{ - N_0 x}}{1 + x}  \mathcal{L}_I (x)   \dint  x ,
\label{eq:13}
\end{align}
where $\mathcal{L}_I (s) = \bbbb E[ e^{ -s I } ]$ denotes the Laplace transform of $I$.
\label{thm:2}
\end{cor}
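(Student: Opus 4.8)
The plan is to reduce the expected logarithm to an integral of a complementary CDF, and then to exploit the fact that $W$ is exponential so that this CDF, once averaged over $I$, becomes exactly the Laplace transform $\mathcal{L}_I$ evaluated at the integration variable.

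First I would use the elementary integral representation
\begin{align}
\log(1+z) = \int_0^\infty \frac{1}{1+x}\, \mathbf{1}\{z > x\} \dint x, \quad z \geq 0,
\end{align}
which follows immediately from $\log(1+z) = \int_0^z \frac{1}{1+x}\dint x$. Substituting $z = \sinr$ and taking expectations, I would interchange the expectation and the integral; since the integrand is nonnegative, Tonelli's theorem justifies this step and gives
\begin{align}
\bbbb E\left[\log(1+\sinr)\right] = \int_0^\infty \frac{1}{1+x}\, \bbbb P(\sinr > x) \dint x.
\end{align}

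Next I would evaluate the tail probability. Since $\sinr = \frac{W}{I+N_0}$, the event $\{\sinr > x\}$ coincides with $\{W > x(I+N_0)\}$. Conditioning on $I$ and using that $W \sim \textrm{Exp}(1)$ is independent of $I$,
\begin{align}
\bbbb P\bigl(W > x(I+N_0) \,\big|\, I\bigr) = e^{-x(I+N_0)}.
\end{align}
Averaging over $I$ and pulling out the deterministic factor $e^{-N_0 x}$, I obtain
\begin{align}
\bbbb P(\sinr > x) = e^{-N_0 x}\, \bbbb E\bigl[e^{-xI}\bigr] = e^{-N_0 x}\, \mathcal{L}_I(x).
\end{align}
Substituting this back into the previous display yields (\ref{eq:13}).

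The computation is short, and the only point requiring genuine care is the interchange of expectation and integration, which is handled entirely by nonnegativity via Tonelli. The conceptual crux — and the reason the final expression is so clean — is the interplay between the integral representation of the logarithm and the exponential signal power: the memoryless tail $e^{-xW}$ converts the SINR tail directly into the Laplace transform of the interference, so that no distributional assumption on $I$ beyond its independence from $W$ is ever needed. I expect this recognition of $\mathcal{L}_I$ to be the one non-mechanical step; everything else is routine.
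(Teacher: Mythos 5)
Your proof is correct and complete: the integral representation of $\log(1+z)$, the Tonelli interchange, and the observation that the exponential tail of $W$ turns $\bbbb P(\sinr > x)$ into $e^{-N_0 x}\mathcal{L}_I(x)$ together give exactly (\ref{eq:13}). The paper does not include its own proof of this corollary (it defers to the cited references \cite{lin2013asilomar, baccelli2009stochastic}), and your argument is precisely the standard derivation found there, so nothing further is needed.
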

Corollary \ref{thm:2}  follows from a more general result given in \cite{lin2013asilomar} and its proof may also be directly found in \cite{baccelli2009stochastic}. Note that in this paper, interference is not Gaussian but may be considered as \textit{conditionally Gaussian}. Specifically, assuming all the transmitters use Gaussian signaling, the interference is Gaussian conditioned on the fading and node locations in the network. Then treating the interference as noise, we may invoke Shannon's formula to determine the maximum achievable spectral efficiency of a typical link. If the random fading and node locations are furthered averaged out, we arrive at the expression (\ref{eq:13}). Though not optimal in an information-theoretical sense, (\ref{eq:13}) serves as a good performance metric and has been widely adopted in literature \cite{baccelli2009stochastic}.

Next we define the ergodic link spectral efficiency $R$, which combines modulation and coding schemes in the physical layer and multiple access protocols in the medium access control layer, as follows.
\begin{align}
R = \bbbb E \left[ \Delta \cdot \log (1+ \sinr ) \right] ,
\end{align}
where $\Delta$ denotes the time and/or frequency resources accessed by the typical link. For example, in the overlay with 
spectrum partition factor $\eta$,  a typical D2D link with random Aloha access probability $\kappa$ can effectively access $\kappa \eta$ time-frequency resources. We will analyze ergodic link spectral efficiency $R$ in detail in Sections \ref{sec:overlay} and \ref{sec:underlay}.

\subsection{Transmit Power Analysis}
\label{subsec:power}

In this subsection, we analyze the transmit power distributions, particularly $\bbbb E[ P_{\textrm{c}} ]$ and $\bbbb E[ P_{\textrm{d}} ]$, the average transmit powers of cellular and potential D2D UEs. The derived results are not only interesting in its own right but also are extensively used for the analysis of rate performance later. To this end, denote by $\hat{P}_{\textrm{d}}$ the generic random variable for the transmit power of potential D2D UEs in D2D mode.

\begin{lem}
The average transmit powers of a typical cellular UE,  a potential D2D UE  and,  a D2D-mode potential D2D UE are respectively given by
\begin{align}
\bbbb E[ P_{\textrm{c}} ] &= \frac{1}{(1+\frac{\alpha}{2})\pi^{\frac{\alpha}{2}} \lambda_\textrm{b}^{\frac{\alpha}{2}}} \label{eq:12} \\
\bbbb E[ P_{\textrm{d}} ] &= e^{-\xi \pi \mu^2} \bbbb E[ P_{\textrm{c}} ]  +   (\xi \pi)^{-\frac{\alpha}{2}}   \gamma (\frac{\alpha}{2} +1, \xi \pi \mu^2  ) \\
\bbbb E[ \hat{P}_{\textrm{d}} ] &= \frac{(\xi \pi)^{-\frac{\alpha}{2}} }{1 - e^{-\xi \pi \mu^2}} \gamma (\frac{\alpha}{2} +1, \xi \pi \mu^2  ), \label{eq:122}
\end{align}
where $\gamma (s,x) = \int_0^{x  }  z^{s-1} e^{-z} \dint  z$ is the lower incomplete Gamma function.
\label{pro:1}
\end{lem}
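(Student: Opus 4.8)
The plan is to exploit the channel-inversion rule $P_i = L_i^\alpha$, which turns each average power into a moment of a link-length distribution, and then to handle the potential D2D UE by conditioning on the mode-selection event through the law of total expectation. Since the transmit power of a single UE depends only on the marginal law of its own link length, neither the interferer point process nor the Aloha parameter $\kappa$ enters, so all three claims are purely distributional computations.

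First I would establish (\ref{eq:12}). Under the disk approximation $\mathcal{A} = \mathcal{B}(0,R)$ with $R = (\pi\lambda_{\textrm{b}})^{-1/2}$, and the assumption that the typical active cellular transmitter is uniform in $\mathcal{A}$ while its serving BS sits at the origin, the cellular link length $L_{\textrm{c}}$ is the distance from the origin to a uniform point in the disk. Hence $\bbbb P(L_{\textrm{c}} \le r) = r^2/R^2$ and $f_{L_{\textrm{c}}}(r) = 2r/R^2$ on $[0,R]$, giving
\begin{align}
\bbbb E[P_{\textrm{c}}] = \bbbb E[L_{\textrm{c}}^\alpha] = \int_0^R r^\alpha \, \frac{2r}{R^2} \, \d r = \frac{2 R^\alpha}{\alpha+2}.
\end{align}
Substituting $R^\alpha = (\pi\lambda_{\textrm{b}})^{-\alpha/2}$ and rewriting $\frac{2}{\alpha+2} = (1+\frac{\alpha}{2})^{-1}$ yields (\ref{eq:12}).

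Next I would treat the potential D2D UE. By distance-based mode selection its transmit power equals the cellular power $P_{\textrm{c}}$ on $\{D \ge \mu\}$ and equals $D^\alpha$ on $\{D < \mu\}$, so conditioning gives
\begin{align}
\bbbb E[P_{\textrm{d}}] = \bbbb P(D \ge \mu)\, \bbbb E[P_{\textrm{c}}] + \bbbb E\!\left[D^\alpha \mathbf{1}(D < \mu)\right].
\end{align}
From the Rayleigh density (\ref{eq:101}), the substitution $u = \xi\pi x^2$ immediately gives $\bbbb P(D \ge \mu) = e^{-\xi\pi\mu^2}$, and the same substitution turns the truncated moment into an incomplete Gamma integral,
\begin{align}
\bbbb E\!\left[D^\alpha \mathbf{1}(D < \mu)\right] = (\xi\pi)^{-\frac{\alpha}{2}} \gamma\!\left(\tfrac{\alpha}{2}+1, \xi\pi\mu^2\right),
\end{align}
which reproduces the stated expression for $\bbbb E[P_{\textrm{d}}]$. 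Dividing this truncated moment by $\bbbb P(D < \mu) = 1 - e^{-\xi\pi\mu^2}$ to form the conditional expectation $\bbbb E[\hat{P}_{\textrm{d}}] = \bbbb E[D^\alpha \mid D < \mu]$ then gives (\ref{eq:122}).

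The computations are elementary, so the only real care needed is twofold: correctly identifying the law of $L_{\textrm{c}}$ from the uniform-in-a-disk modeling assumption (this is where the cell-as-disk approximation enters), and keeping the conditioning straight -- the contribution to $\bbbb E[P_{\textrm{d}}]$ uses the \emph{unnormalized} truncated moment, whereas $\bbbb E[\hat{P}_{\textrm{d}}]$ is that same moment \emph{normalized} by $\bbbb P(D < \mu)$. The substitution $u = \xi\pi x^2$ is the single device that both evaluates the tail probability and produces the lower incomplete Gamma function.
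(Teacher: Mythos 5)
Your proposal is correct and follows essentially the same route as the paper's proof: the disk approximation with a uniform-in-disk cellular link length giving $\bbbb E[L_{\textrm{c}}^\alpha]$, the Rayleigh-distance truncated moment evaluated as a lower incomplete Gamma function, and the law of total expectation over the mode-selection event. The only cosmetic difference is ordering—the paper computes the conditional moment $\bbbb E[\hat{P}_{\textrm{d}}]$ first and then forms the mixture for $\bbbb E[P_{\textrm{d}}]$, while you work with the unnormalized truncated moment and normalize afterwards—which is mathematically the same calculation.
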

\begin{proof}
See Appendix \ref{proof:pro:1}.
\end{proof}

Note that both $\bbbb E[ P_{\textrm{c}} ]$ and $\bbbb E[ P_{\textrm{d}}]$ increase as pathloss exponent $\alpha$ increases and are inversely proportional to the square root of BS density. Next we examine how to choose D2D mode selection threshold $\mu$ to minimize the average transmit power  $\bbbb E[ P_{\textrm{d}}]$  of potential D2D UEs.
\begin{pro}
For any distribution function $f_D(x)$ of the nonnegative random distance $D$, $\bbbb E[ P_{\textrm{d}}]$ is minimized when 
\begin{align}
\mu^{\star} &= (\bbbb E[ P_{\textrm{c}} ])^{\frac{1}{\alpha}} = \left(\frac{1}{1+\frac{\alpha}{2}} \right)^{ \frac{1}{\alpha} }  \sqrt{ \frac{1}{\pi \lambda_\textrm{b}} }. 
\end{align}
\label{pro:2}
\end{pro}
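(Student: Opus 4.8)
The plan is to express $\bbbb E[ P_{\textrm{d}} ]$ directly as a function of the threshold $\mu$ in a form that makes both the dependence on $\mu$ and the independence from the particular law of $D$ transparent. By the distance-based mode-selection rule, a potential D2D UE transmits with the cellular power $P_{\textrm{c}}$ whenever $D \geq \mu$ and with its own channel-inverted power $D^{\alpha}$ whenever $D < \mu$, so that $P_{\textrm{d}} = P_{\textrm{c}} \mathbf{1}(D \geq \mu) + D^{\alpha}\mathbf{1}(D < \mu)$. Since the cellular link length (hence $P_{\textrm{c}}$) is determined by the UE's position in its cell and is independent of the potential D2D distance $D$, taking expectations gives
\begin{align}
\bbbb E[ P_{\textrm{d}} ] = \bbbb E[ P_{\textrm{c}} ] \, \bbbb P(D \geq \mu) + \int_0^\mu x^{\alpha} f_D(x) \dint x .
\end{align}
This is exactly the expression of Lemma \ref{pro:1} specialized through (\ref{eq:101}), but written so that the argument applies to an arbitrary density $f_D$.

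Next I would differentiate with respect to $\mu$. The first term contributes $-\bbbb E[ P_{\textrm{c}} ] f_D(\mu)$, and the fundamental theorem of calculus handles the integral, yielding
\begin{align}
\frac{\dint}{\dint \mu}\bbbb E[ P_{\textrm{d}} ] = f_D(\mu)\left( \mu^{\alpha} - \bbbb E[ P_{\textrm{c}} ] \right).
\end{align}
The crucial observation is that this derivative factors into the nonnegative density $f_D(\mu)$ and the term $\mu^{\alpha} - \bbbb E[ P_{\textrm{c}} ]$, which is strictly increasing in $\mu$ and vanishes precisely at $\mu^{\star} = (\bbbb E[ P_{\textrm{c}} ])^{1/\alpha}$. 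Hence on the support of $D$ the derivative is negative for $\mu < \mu^{\star}$ and positive for $\mu > \mu^{\star}$, so $\bbbb E[ P_{\textrm{d}} ]$ decreases then increases and attains its global minimum at $\mu^{\star}$. Substituting the closed form (\ref{eq:12}) for $\bbbb E[ P_{\textrm{c}} ]$ then gives the stated value of $\mu^{\star}$.

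The proof is short because the problem has a clean threshold structure: the optimal rule simply switches to D2D mode exactly when the realized D2D power $D^{\alpha}$ falls below the mean cellular power $\bbbb E[ P_{\textrm{c}} ]$, which is precisely why $\mu^{\star}$ does not depend on the distribution of $D$. The only point requiring care, and the step I expect to be the main (though mild) obstacle, is confirming that the stationary point is a genuine \emph{global} minimizer rather than merely a local one: this is secured by the sign-change argument above, provided $f_D$ is positive in a neighborhood of $\mu^{\star}$, as it is for the Rayleigh density (\ref{eq:101}). Where $f_D$ vanishes on an interval the minimizer need not be unique, but $\mu^{\star}$ remains a minimizer, so the statement holds for any distance distribution.
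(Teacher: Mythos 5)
Your proposal is correct and follows essentially the same route as the paper's own proof: the same decomposition $\bbbb E[P_{\textrm{d}}] = \bbbb E[P_{\textrm{c}}]\,\bbbb P(D\geq\mu) + \int_0^\mu x^\alpha f_D(x)\dint x$, the same derivative $f_D(\mu)(\mu^\alpha - \bbbb E[P_{\textrm{c}}])$, and the same sign-change argument establishing the global minimum at $\mu^\star = (\bbbb E[P_{\textrm{c}}])^{1/\alpha}$. Your added remark on non-uniqueness when $f_D$ vanishes on an interval is a mild refinement the paper glosses over, but it does not alter the argument.
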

\begin{proof}
See Appendix \ref{proof:pro:2}.
\end{proof}

Prop. \ref{pro:2} shows that $\mu^{\star}$ is only a function of the average transmit power $\bbbb E[ P_{\textrm{c}} ]$ of cellular UEs and is independent of the distribution of $D$; in particular, the Rayleigh assumption made in (\ref{eq:101}) is not needed here. Specifically, $\mu^{\star}$ 
is inversely proportional to the square root of BS density $\lambda_\textrm{b}$, which is intuitive: cellular mode becomes more favorable when more BSs are available. In addition,  $(\frac{1}{1+\frac{\alpha}{2}})^{ \frac{1}{\alpha} }$ monotonically increases as $\alpha$ increases. This implies that $\mu^{\star}$ increases in $\alpha$, agreeing with intuition: local transmission with D2D mode is more favorable  for saving transmit power  when the pathloss exponent increases. 

\begin{table}
\centering
\begin{tabular}{|l||r|} \hline
Density of macrocells $\lambda_{\textrm{b}}$ & $(\pi 500^2)^{-1}$ m$^{-2}$  \\ \hline 
Density of UEs $\lambda$   & $10 \times (\pi 500^2)^{-1}$ m$^{-2}$  \\ \hline 
D2D distance parameter $\xi$   & $10 \times (\pi 500^2)^{-1}$ m$^{-2}$  \\ \hline 
Potential D2D UEs $q$   & $0.2$  \\ \hline 
Pathloss exponent $\alpha$ & $3.5$ \\ \hline
$\snr_m$    & $10$ dB  \\ \hline 
Mode selection threshold $\mu$ & $200$ m \\ \hline 
Aloha access probability $\kappa$ & $1$ \\ \hline 
Spectrum partition factor $\eta$ & $0.2$ \\ \hline  
UE weights $(w_{\textrm{c}},w_{\textrm{d}})$ & $(0.6,0.4)$ \\ \hline
Spectrum access factor $\beta$ & $1$ \\ \hline 
Number of subchannels $B$ & $1$ \\ \hline   
\end{tabular}
\caption{Simulation/Numerical Parameters}
\label{tab:sys:para}
\end{table}

Before ending this section, we give a numerical example in Fig. \ref{fig:00} showing that UE power constraint can be satisfied by choosing the right operating regime $\snr_m$. Throughout this paper, the parameters used in plotting numerical or simulation results are summarized in Table \ref{tab:sys:para} unless otherwise
specified. In Fig. \ref{fig:00}, the cellular peak power $P_{\textrm{c},\max}$ is defined as the minimum transmit power used by a cell-edge cellular transmitter to meet the target $\snr_m$, i.e., $P_{\textrm{c},\max}$ is determined by
$
\snr_m = \frac{P_{\textrm{c},\max} R^{-\alpha}   }{\tilde{N}_0 B_w}.
$
Similar, the D2D peak power $P_{\textrm{d},\max}$ is determined by
$
\snr_m = \frac{P_{\textrm{d},\max} \mu^{-\alpha}   }{\tilde{N}_0 B_w}.
$
The average power of a cellular (resp. D2D) transmitter can be obtained by multiplying (\ref{eq:12}) (resp. (\ref{eq:122})) with the scaling factor $\tilde{N}_0 B_w \cdot \snr_m$. As shown in Fig. \ref{fig:00}, the typical UE power constraint $23$ dBm (i.e. $200$ mW) is well respected even when $\snr_m$=10 dB, a relatively high average received SNR in the uplink. Besides, Fig. \ref{fig:00} also shows that compared to cellular transmitters, D2D transmitters can save about $15$ dB transmit power in achieving the same $\snr_m$ target, demonstrating the energy efficiency of D2D communication. In other words, with the same power budget, the D2D links can enjoy about $15$ dB higher $\snr_m$ than the cellular links.

\begin{figure}
\centering
\includegraphics[width=8cm]{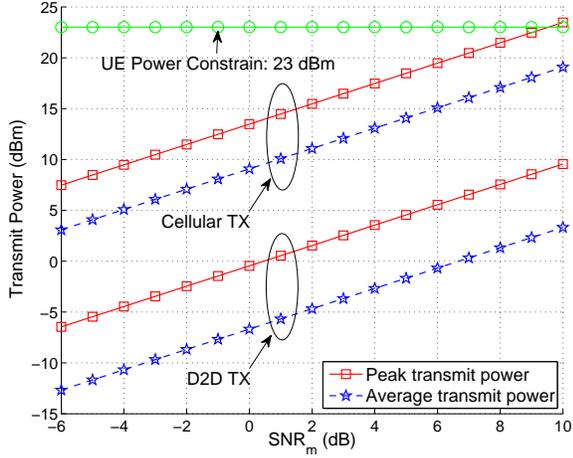}
\caption{UE transmit power versus network operating regime $\snr_m$ with $\tilde{N}_0 = -174$ dBm/Hz and $B_w = 1$ MHz.}
\label{fig:00}
\end{figure}


\section{Analysis of Overlay In-Band D2D}
\label{sec:overlay}

\subsection{Link Spectral Efficiency}

Let us consider a typical D2D link. As the underlying PPP is stationary, without loss of generality we assume that the typical receiver is located at the origin. Note that the analysis carried out for a typical link indicates the spatially averaged performance of the network by Slivnyak's theorem \cite{stoyan1995stochastic}. Henceforth, we focus on characterizing the performance of a typical link, which may be either a D2D or cellular link depending on the context.

With overlay in-band D2D, the interferers are cochannel D2D transmitters. Due to the random Aloha access, the effective interferers  constitute a homogeneous PPP with density $\kappa \lambda_{\textrm{d}}$, a PPP thinned from $\Phi_{\textrm{d}}$ with thinning probability $\kappa$. Denoting this thinned PPP by $\kappa \Phi_{\textrm{d}}$, the interference at the typical D2D receiver is given by
\begin{align}
I_{\textrm{d}} = \sum_{X_i \in \kappa\Phi_{\textrm{d}} \setminus \{o\} } \hat{P}_{\textrm{d},i} G_{i} \|X_{i}\|^{-\alpha} . \notag 
\end{align}
\begin{pro}
With overlay in-band D2D, the complementary cumulative distribution function (CCDF) of the SINR of D2D links is given by
\begin{align}
\bbbb P(\sinr \geq x) =\exp \left(- N_0 x - c  x^{ \frac{2}{\alpha} } \right), \quad x \geq 0, 
\label{eq:ccdf1}
\end{align}
where $c$ is a non-negative constant given by
\begin{align}
c = \frac{ \kappa q(\frac{\lambda}{\xi} - (\frac{\lambda}{\xi}+  \lambda \pi \mu^2) e^{-\xi \pi \mu^2} ) }{\textrm{sinc} (\frac{2}{\alpha}) } . 
\label{eq:cc}
\end{align}
Further, the spectral efficiency $R_{\textrm{d}}$ of D2D links is given by
\begin{align}
R_{\textrm{d}}
&= \kappa \int_0^\infty   \frac{ e^{ - N_0 x}}{1 + x}   e^{ - c  x^{ \frac{2}{\alpha} }  }   \dint x.
\label{eq:100}
\end{align}
\label{pro:3}
\end{pro}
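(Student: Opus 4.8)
The plan is to obtain the CCDF in (\ref{eq:ccdf1}) through the Laplace transform of the aggregate D2D interference, exploiting that channel inversion makes the desired signal power $W=G_0\sim\textrm{Exp}(1)$. First I would condition on $I_{\textrm{d}}$ and use the exponential tail of $W$ to write $\bbbb{P}(\sinr\geq x)=\bbbb{P}\big(W\geq x(I_{\textrm{d}}+N_0)\big)=\bbbb{E}\big[e^{-x(I_{\textrm{d}}+N_0)}\big]=e^{-N_0 x}\,\mathcal{L}_{I_{\textrm{d}}}(x)$, where $\mathcal{L}_{I_{\textrm{d}}}(s)=\bbbb{E}[e^{-sI_{\textrm{d}}}]$. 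This immediately accounts for the $e^{-N_0 x}$ factor, so the entire content of the proposition is carried by $\mathcal{L}_{I_{\textrm{d}}}$, and once it is shown to equal $\exp(-c\,x^{2/\alpha})$ the CCDF follows.

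The core step is evaluating $\mathcal{L}_{I_{\textrm{d}}}(s)$ via the probability generating functional of the thinned PPP $\kappa\Phi_{\textrm{d}}$, whose intensity is $\kappa\lambda_{\textrm{d}}$. Each interferer carries independent marks: its fading $G_i\sim\textrm{Exp}(1)$ and its transmit power $\hat P_{\textrm{d},i}=D_i^{\alpha}$, where $D_i$ is distributed as $D$ conditioned on $D<\mu$ (only D2D-mode UEs transmit) and is independent of the location $X_i$. The PGFL gives $\mathcal{L}_{I_{\textrm{d}}}(s)=\exp\!\big(-\kappa\lambda_{\textrm{d}}\int_{\bbbb R^2}\bbbb{E}_{\hat P,G}[1-e^{-s\hat P G\|x\|^{-\alpha}}]\,\dint x\big)$. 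I would first take the expectation over $G$ to turn the inner term into $\bbbb{E}_{\hat P}\big[\,s\hat P r^{-\alpha}/(1+s\hat P r^{-\alpha})\big]$ with $r=\|x\|$, then pass to polar coordinates and substitute $t=r^{\alpha}$ followed by $v=t/(s\hat P)$. The radial integral reduces to the standard beta-function identity $\int_0^\infty v^{2/\alpha-1}/(1+v)\,\dint v=\pi/\sin(2\pi/\alpha)$, which converges precisely because $\alpha>2$ (this is where that assumption is used, also ensuring finiteness of $I_{\textrm{d}}$), and produces the $\textrm{sinc}(2/\alpha)$ factor. The outcome is $\mathcal{L}_{I_{\textrm{d}}}(s)=\exp\!\big(-\tfrac{\kappa s^{2/\alpha}}{\textrm{sinc}(2/\alpha)}\,\lambda_{\textrm{d}}\,\bbbb{E}[\hat P^{2/\alpha}]\big)$, exhibiting the desired $s^{2/\alpha}$ dependence.

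It remains to identify the constant. Since $\hat P^{2/\alpha}=D^2$, the key cancellation is $\lambda_{\textrm{d}}\,\bbbb{E}[\hat P^{2/\alpha}]=q\lambda\,\bbbb{P}(D<\mu)\cdot\bbbb{E}[D^2\mid D<\mu]=q\lambda\int_0^\mu x^2 f_D(x)\,\dint x$, which conveniently removes the conditioning. Substituting $f_D$ from (\ref{eq:101}) and letting $u=\xi\pi x^2$ turns the integral into $(\xi\pi)^{-1}\int_0^{\xi\pi\mu^2} u e^{-u}\,\dint u=(\xi\pi)^{-1}\big(1-(1+\xi\pi\mu^2)e^{-\xi\pi\mu^2}\big)$; multiplying by the remaining $\pi$-factor and $q\lambda$ reproduces exactly the numerator $\kappa q\big(\tfrac{\lambda}{\xi}-(\tfrac{\lambda}{\xi}+\lambda\pi\mu^2)e^{-\xi\pi\mu^2}\big)$ of $c$ in (\ref{eq:cc}). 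Finally, the spectral efficiency (\ref{eq:100}) follows by applying Corollary \ref{thm:2} with $\mathcal{L}_I(x)=e^{-c x^{2/\alpha}}$ and scaling by the Aloha access fraction $\kappa$ that the typical D2D link occupies, i.e. taking $\Delta=\kappa$.

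I expect the main obstacle to be the interference Laplace transform, specifically handling the random power marks correctly inside the PGFL and carrying out the moment reduction cleanly. The subtle points are that the marks are the \emph{conditional} powers (so the conditioning must be tracked until it cancels against $\bbbb{P}(D<\mu)$ in $\lambda_{\textrm{d}}$), and that the fractional-moment computation must yield the exact $\textrm{sinc}(2/\alpha)$ normalization rather than an arbitrary $\sin$-based constant; verifying these two details is what pins down the precise form of $c$.
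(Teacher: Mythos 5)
Your proposal is correct and follows essentially the same route as the paper's proof: express the CCDF as $e^{-N_0 x}\mathcal{L}_{I_{\textrm{d}}}(x)$ using the exponential signal power, evaluate $\mathcal{L}_{I_{\textrm{d}}}$ via the PGFL of the thinned marked PPP to get $\exp\bigl(-\pi\kappa\lambda_{\textrm{d}}\,\bbbb E[\hat P_{\textrm{d}}^{2/\alpha}]\,s^{2/\alpha}/\textrm{sinc}(2/\alpha)\bigr)$, identify the constant through $\bbbb E[\hat P_{\textrm{d}}^{2/\alpha}]=\bbbb E[L_{\textrm{d}}^2\mid D<\mu]$, and invoke Corollary \ref{thm:2} with $\Delta=\kappa$. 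The only differences are presentational: you derive the $\textrm{sinc}(2/\alpha)$ factor explicitly from the beta-function identity where the paper cites it, and you merge $\lambda_{\textrm{d}}$ with the conditional moment to cancel the conditioning, whereas the paper (invoking Slivnyak's theorem for the Palm conditioning) computes the two factors separately and multiplies them.
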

\begin{proof}
This proposition follows by evaluating the Laplace transform of $I_{\textrm{d}}$ and using (\ref{eq:13}). See Appendix \ref{proof:pro:3} for details.
\end{proof}

Note that in Prop. \ref{pro:3}, as $\mu$ increases, $c$ monotonically increases, which in turn results in monotonically decreasing $R_{\textrm{d}}$. This agrees with intuition: the spectral efficiency of D2D links decreases when more potential D2D UEs choose D2D mode (leading to increased interference). 
In particular, 
\begin{align}
R_{\textrm{d},\min} =\lim_{\mu \to \infty} R_{\textrm{d}}
&= \kappa \int_0^\infty   \frac{ e^{ - N_0 x}}{1 + x}  e^{ -\kappa q \frac{\lambda}{\xi} (\textrm{sinc} (\frac{2}{\alpha})  )^{-1}  x^{ \frac{2}{\alpha} }  }   \dint x. \notag 
\end{align}
The typical D2D link experiences the most severe interference in this case and thus has the minimum spectral efficiency. 
On the contrary,
\begin{align}
R_{\textrm{d},\max}
= \lim_{\mu \to 0} R_{\textrm{d}} = \kappa \int_0^\infty   \frac{ 1}{1 + x} e^{ - N_0 x}  \dint x = \kappa e^{N_0} E_1(N_0), \notag 
\end{align}
where  $E_1 (z) = \int_z^{\infty} \frac{e^{-x}}{x} \dint x $ is the exponential integral. The typical D2D link is free of interference in this case and thus has the maximum spectral efficiency.

Now let us consider a typical  uplink. With overlay in-band D2D, the interferers are cellular transmitters in the other cells. The interference at the typical BS is given by
\begin{align}
I_{\textrm{c}} = \sum_{X_i \in \Phi_{\textrm{c,a}} \cap \mathcal{A}^{c} } P_{\textrm{c},i} G_{i} \|X_{i}\|^{-\alpha} .  \notag 
\end{align}

\begin{pro}
With overlay in-band D2D, the CCDF of the SINR of cellular links is given by
\begin{align}
&\bbbb P(\sinr \geq x) = 
\exp \bigg( - N_0 x - 2\pi \lambda_{\textrm{b}} \int_{R}^\infty    \notag \\  
&\quad \bigg(1    - {}_2F_1 (1, \frac{2}{\alpha}; 1+\frac{2}{\alpha}; -  \frac{x}{(R/r)^{\alpha}}   )   \bigg)   r  \dint r  \bigg), \ \ x \geq 0, 
\label{eq:ccdf2}
\end{align}
where ${}_2F_1(a,b;c;x)$ denotes the hypergeometric function, and $R = \sqrt{\frac{1}{\pi \lambda_{\textrm{b}}}}$. Further, the spectral efficiency $R_{\textrm{c}}$ of cellular links is given by
\begin{align}
&R_{\textrm{c}}
 =  \frac{\lambda_{\textrm{b}}}{\lambda_{\textrm{c}} } ( 1 - e^{-\frac{\lambda_{\textrm{c}}}{\lambda_{\textrm{b}}}} ) \int_0^\infty   \frac{ e^{ - N_0 x}}{1 + x} \exp \bigg( - 2\pi \lambda_{\textrm{b}}  \notag \\ 
&\times \int_{R}^\infty \left(1 - {}_2F_1 (1, \frac{2}{\alpha}; 1+\frac{2}{\alpha}; -  \frac{x}{(R/r)^{\alpha}}   )  \right)  r\dint r \bigg)   \dint x , \label{eq:16}
\end{align}
where $\lambda_{\textrm{c}} = (1-q)\lambda + q \lambda e^{-\xi \pi \mu^2}$.
\label{pro:4}
\end{pro}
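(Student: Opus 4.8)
The plan is to mirror the two-step recipe already used for the D2D link in Prop.~\ref{pro:3}, but now applied to the cellular interference field, and then to add a scheduling factor for the rate. \textbf{Step 1 (CCDF from the exponential signal).} Because channel inversion normalizes the received signal power, the desired signal is $W = G_0 \sim \textrm{Exp}(1)$ and is independent of the cellular interference $I_{\textrm{c}}$. Conditioning on $I_{\textrm{c}}$ and using the exponential tail,
\begin{align}
\bbbb P(\sinr \geq x) = \bbbb P\big(G_0 \geq x (I_{\textrm{c}} + N_0)\big) = e^{-N_0 x}\, \mathcal{L}_{I_{\textrm{c}}}(x), \notag
\end{align}
so everything reduces to evaluating the Laplace transform $\mathcal{L}_{I_{\textrm{c}}}(s)$ at $s=x$; the CCDF~(\ref{eq:ccdf2}) is read off once that transform is in closed form. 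Note that, thanks to channel inversion, this law is independent of the typical transmitter's position.

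\textbf{Step 2 (Laplace transform and the hypergeometric kernel).} I would compute $\mathcal{L}_{I_{\textrm{c}}}(s)$ via the probability generating functional of the interferer PPP $\Phi_{\textrm{c,a}}$, which has intensity $\lambda_{\textrm{b}}$ and is supported on $\mathcal{A}^{c}$. Since the marks (fading $G_i$ and transmit power $P_{\textrm{c},i}$) are i.i.d.\ and independent of the locations,
\begin{align}
\mathcal{L}_{I_{\textrm{c}}}(s) = \exp\!\bigg( -\lambda_{\textrm{b}} \int_{\mathcal{A}^{c}} \bbbb E_{P_{\textrm{c}},G}\big[ 1 - e^{-s P_{\textrm{c}} G \|x\|^{-\alpha}} \big] \dint x \bigg). \notag
\end{align}
Averaging over $G \sim \textrm{Exp}(1)$ turns the inner term into $\bbbb E_{P_{\textrm{c}}}\big[ s P_{\textrm{c}} r^{-\alpha}/(1 + s P_{\textrm{c}} r^{-\alpha})\big]$ with $r=\|x\|$. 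The remaining average is over the interferer power $P_{\textrm{c}} = L_{\textrm{c}}^{\alpha}$, where the length $L_{\textrm{c}}$ of a transmitter uniformly located in $\mathcal{B}(0,R)$ has density $2\ell/R^2$ on $[0,R]$; substituting $t = (L_{\textrm{c}}/R)^{\alpha}$ converts this into the Euler-type integral $\tfrac{2}{\alpha}\int_0^1 t^{2/\alpha-1}(1 + s R^{\alpha} r^{-\alpha} t)^{-1}\dint t$, which is exactly the integral representation of ${}_2F_1(1,\tfrac{2}{\alpha};1+\tfrac{2}{\alpha};\cdot)$. Passing to polar coordinates, $\int_{\mathcal{A}^{c}}(\cdot)\dint x = 2\pi\int_R^\infty (\cdot)\, r\dint r$, and setting $s=x$ produces the exponent in~(\ref{eq:ccdf2}). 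I expect this double averaging -- first over Rayleigh fading, then over the random interferer power induced by channel inversion -- to be the crux of the argument, since it is what generates the hypergeometric kernel; it is also the step where integrability as $r\to\infty$ must be verified, which holds precisely because $\alpha>2$.

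\textbf{Step 3 (link spectral efficiency).} For $R_{\textrm{c}} = \bbbb E[\Delta \log(1+\sinr)]$ I would first argue that the resource share $\Delta$ and the $\sinr$ factor independently: the $\sinr$ law is position-independent and governed only by the out-of-cell interferers $\Phi_{\textrm{c,a}}$, whereas $\Delta$ is an in-cell quantity set by round-robin sharing, so the two are taken independent under the model and $R_{\textrm{c}} = \bbbb E[\Delta]\cdot\bbbb E[\log(1+\sinr)]$. The mean share follows from the reduced Palm distribution of the Poisson UE process: a typical cellular UE shares its cell with $N'\sim\textrm{Poisson}(\lambda_{\textrm{c}}/\lambda_{\textrm{b}})$ other cellular transmitters and receives fraction $1/(1+N')$ of the resources, giving $\bbbb E[\Delta]=\bbbb E[1/(1+N')] = \tfrac{\lambda_{\textrm{b}}}{\lambda_{\textrm{c}}}(1 - e^{-\lambda_{\textrm{c}}/\lambda_{\textrm{b}}})$, the prefactor in~(\ref{eq:16}). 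Finally $\bbbb E[\log(1+\sinr)]$ is obtained by inserting the $\mathcal{L}_{I_{\textrm{c}}}$ of Step~2 into Corollary~\ref{thm:2}, reproducing the inner integral of~(\ref{eq:16}); the stated $\lambda_{\textrm{c}} = (1-q)\lambda + q\lambda e^{-\xi\pi\mu^2}$ is just $(1-q)\lambda + q\lambda\,\bbbb P(D\geq\mu)$ evaluated under the Rayleigh distance law~(\ref{eq:101}).
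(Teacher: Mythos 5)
Your proposal is correct and follows essentially the same route as the paper's own proof: the CCDF via $\bbbb P(\sinr \geq x) = e^{-N_0 x}\mathcal{L}_{I_{\textrm{c}}}(x)$, the Laplace transform via the probability generating functional of $\Phi_{\textrm{c,a}}$ outside $\mathcal{B}(0,R)$ with successive averaging over the Rayleigh fading and the channel-inversion power $P_{\textrm{c}}=L_{\textrm{c}}^{\alpha}$ (which produces the hypergeometric kernel), and the prefactor $\frac{\lambda_{\textrm{b}}}{\lambda_{\textrm{c}}}(1-e^{-\lambda_{\textrm{c}}/\lambda_{\textrm{b}}})$ obtained as the mean resource share of a typical transmitter among a Poisson number of in-cell competitors via Slivnyak's theorem, exactly as in the paper's Appendix. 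The only cosmetic differences are that you invoke the Euler integral representation of ${}_2F_1$ explicitly and spell out the independence of the resource share and the SINR, both of which the paper leaves implicit.
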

\begin{proof}
See Appendix \ref{proof:pro:4}.
\end{proof}

Unlike the closed form expression (\ref{eq:ccdf1}) for the CCDF of the SINR of D2D links, the expression (\ref{eq:ccdf2}) for the CCDF of the uplink SINR involves an integration. To have some insights, we consider sparse (i.e. $\lambda_{\textrm{b}} \to 0$) and dense (i.e. $\lambda_{\textrm{b}} \to \infty$) networks in the following corollary. 
\begin{cor}
In a sparse network with $\lambda_{\textrm{b}} \to 0$, the CCDF of the SINR of cellular links is given by 
\begin{align}
\bbbb P(\sinr \geq x) = \exp \left( - \left(N_0 + \frac{4}{\alpha^2 - 4} \right) x    \right) ,  x \geq 0,
\end{align}
In a dense network with $\lambda_{\textrm{b}} \to \infty$, the CCDF of the SINR of cellular links is given by 
\begin{align}
\bbbb P(\sinr \geq x) = \exp \left( - N_0  x  - \frac{ 1 }{2 \textrm{sinc} (\frac{2}{\alpha}  ) }  x^{ \frac{2}{\alpha} }  \right),  x \geq 0.
\end{align}
\label{cor:2}
\end{cor}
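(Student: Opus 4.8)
The plan is to start from the cellular CCDF of Proposition~\ref{pro:4} and to notice that the additive noise term $N_0 x$ in the exponent is untouched by either limit, so the whole problem reduces to evaluating the interference exponent
\[
\Psi(x)=2\pi\lambda_{\textrm b}\int_{R}^{\infty}\Bigl(1-{}_2F_1\bigl(1,\tfrac{2}{\alpha};1+\tfrac{2}{\alpha};-x(R/r)^{\alpha}\bigr)\Bigr)r\,\dint r
\]
as $\lambda_{\textrm b}\to0$ and $\lambda_{\textrm b}\to\infty$, with $R=\sqrt{1/(\pi\lambda_{\textrm b})}$ so that $\pi\lambda_{\textrm b}R^{2}=1$ throughout. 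Since $\mathcal{L}_{I_{\textrm c}}(x)=e^{-\Psi(x)}$, each limit is just the identification of the dominant behaviour of the Laplace exponent of the aggregate uplink interference.

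In the sparse regime I would replace the interference by its mean. Concretely, I would use the first-order expansion ${}_2F_1(1,\tfrac2\alpha;1+\tfrac2\alpha;-z)=1-\tfrac{2}{\alpha+2}z+O(z^{2})$ with $z=x(R/r)^{\alpha}$, so that $1-{}_2F_1\approx\tfrac{2}{\alpha+2}x(R/r)^{\alpha}$; integrating $\int_{R}^{\infty}(R/r)^{\alpha}r\,\dint r=R^{2}/(\alpha-2)$ and using $\pi\lambda_{\textrm b}R^{2}=1$ then gives $\Psi(x)\to\frac{4}{\alpha^{2}-4}x$. As a check, this is exactly $x\,\bbbb E[I_{\textrm c}]$: by Campbell's theorem $\bbbb E[I_{\textrm c}]=2\pi\lambda_{\textrm b}\,\bbbb E[P_{\textrm c}]\int_{R}^{\infty}r^{1-\alpha}\,\dint r$, and inserting $\bbbb E[P_{\textrm c}]$ from Lemma~\ref{pro:1} yields $\frac{4}{\alpha^{2}-4}$, which combines with the noise term into the claimed exponent $N_0+\frac{4}{\alpha^2-4}$.

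In the dense regime I would instead use that $R\to0$, so that the guard disk $\mathcal{B}(0,R)$ collapses and the interferers may be taken as a homogeneous PPP of intensity $\lambda_{\textrm b}$ on all of $\bbbb R^{2}$. The standard full-plane Laplace computation (averaging the Rayleigh fading and then the radial integral) gives $\mathcal{L}_{I_{\textrm c}}(x)=\exp\bigl(-\lambda_{\textrm b}\pi\,\bbbb E[P_{\textrm c}^{2/\alpha}]\,(\textrm{sinc}(\tfrac{2}{\alpha}))^{-1}x^{2/\alpha}\bigr)$, where the radial integral collapses through $\int_{0}^{\infty}\tfrac{t}{1+t^{\alpha}}\,\dint t=\tfrac{\pi/\alpha}{\sin(2\pi/\alpha)}=\tfrac{1}{2\,\textrm{sinc}(2/\alpha)}$ (equivalently via the reflection formula $\Gamma(1+\tfrac2\alpha)\Gamma(1-\tfrac2\alpha)=(\textrm{sinc}(\tfrac2\alpha))^{-1}$). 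Substituting $\bbbb E[P_{\textrm c}^{2/\alpha}]=\bbbb E[L_{\textrm c}^{2}]=R^{2}/2$ for the channel-inverted marks and once more $\pi\lambda_{\textrm b}R^{2}=1$ collapses the exponent to $\frac{1}{2\,\textrm{sinc}(2/\alpha)}x^{2/\alpha}$.

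The crux is not the special-function bookkeeping but justifying the two approximations as genuine limits. After the rescaling $u=r/R$ the exponent becomes $2\int_{1}^{\infty}(1-{}_2F_1(1,\tfrac2\alpha;1+\tfrac2\alpha;-xu^{-\alpha}))u\,\dint u$, i.e.\ it is scale-invariant once $R$ is tied to $\lambda_{\textrm b}$; the sparse and dense expressions are thus best read as the small-argument (mean-interference) and large-argument (interference-limited, full-plane) regimes of one and the same exponent. The technical work is therefore to show by dominated convergence that the neglected pieces are subdominant — the quadratic-and-higher remainder of ${}_2F_1$ in the sparse case, and the contribution the full-plane model spuriously assigns to the guard disk $\mathcal{B}(0,R)$ in the dense case — after which the constants follow solely from $\pi\lambda_{\textrm b}R^{2}=1$, the value $\bbbb E[I_{\textrm c}]=\frac{4}{\alpha^{2}-4}$, and the $\Gamma$/sinc identity above.
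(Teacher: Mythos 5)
Your two computations are exactly the paper's: for the sparse case the paper also linearizes ($1-e^{-y}\approx y$ inside the Laplace functional, which is precisely your first-order ${}_2F_1$ expansion, i.e.\ $\mathcal{L}_{I_{\textrm{c}}}(x)\approx e^{-x\,\mathbb{E}[I_{\textrm{c}}]}$ with $\mathbb{E}[I_{\textrm{c}}]=\frac{4}{\alpha^2-4}$ by Campbell's theorem), and for the dense case the paper also discards the guard disk and runs the full-plane PPP computation with $\mathbb{E}[P_{\textrm{c}}^{2/\alpha}]=\mathbb{E}[L_{\textrm{c}}^{2}]=\frac{1}{2\pi\lambda_{\textrm{b}}}$, arriving at the same constants. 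Where you genuinely depart from the paper is the justification, and your version is the sharper one. The paper argues ``$\lambda_{\textrm{b}}$ small $\Rightarrow R$ large $\Rightarrow$ the exponent is small'' and ``$\lambda_{\textrm{b}}$ large $\Rightarrow R$ small $\Rightarrow$ extend the integral to $0$,'' but your rescaling $u=r/R$ shows the exact interference exponent equals $2\int_1^\infty\bigl(1-{}_2F_1(1,\tfrac{2}{\alpha};1+\tfrac{2}{\alpha};-xu^{-\alpha})\bigr)u\,{\rm d}u$, which has \emph{no} $\lambda_{\textrm{b}}$ dependence at all (channel inversion together with $\pi\lambda_{\textrm{b}}R^2=1$ makes the uplink model scale-invariant). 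Hence neither approximation can actually be driven by $\lambda_{\textrm{b}}$: on $r\geq R$ the exponent argument $xGL_{\textrm{c}}^{\alpha}r^{-\alpha}$ is of order $x$ (not of order $1/R$), so the linearization is a small-$x$ approximation; and the guard-disk contribution you must discard is bounded by $\pi\lambda_{\textrm{b}}R^2=1$, which is negligible only relative to $x^{2/\alpha}\to\infty$, so the full-plane formula is a large-$x$ approximation. Your reading --- two asymptotic regimes in $x$ of one and the same CCDF --- is therefore the mathematically defensible statement, and it exposes that the paper's own ``sparse/dense'' framing is not what makes its proof work.

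One caveat on your closing step: precisely because of the scale invariance you identified, the dominated-convergence argument you defer cannot deliver the Corollary as written (pointwise equality for all $x\geq 0$ in a limit $\lambda_{\textrm{b}}\to 0$ or $\lambda_{\textrm{b}}\to\infty$); the exact CCDF is independent of $\lambda_{\textrm{b}}$, so no such limit changes it. What is provable is $\Psi(x)\sim\frac{4}{\alpha^2-4}x$ as $x\to 0$ and $\Psi(x)\sim\frac{1}{2\,\textrm{sinc}(2/\alpha)}x^{2/\alpha}$ as $x\to\infty$. That weakening is forced by the statement itself (and is equally a gap in the paper's proof), not by your argument; just be explicit that this is the version you are proving.
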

\begin{proof}
See Appendix \ref{proof:cro:2}.
\end{proof}

In a sparse network, Corollary \ref{cor:2} implies that interference and noise have the same impact on the SINR coverage performance (in the order sense). From this perspective, we may simply consider interference as an extra source of noise. Thus, the sparse network is noise-limited. In contrast, in a dense network, the impact of interference behaves differently for UEs with different SINR targets. For users with low SINR target (i.e. $x \to 0$), interference has a more pronounced impact on the SINR coverage performance than the noise. The converse is true for users with high SINR target (i.e. $x \to \infty$).

If we denote by $\theta_{\textrm{c}}$ the SINR threshold for successful uplink transmissions and consider the outage probability $\bbbb P(\sinr < \theta_{\textrm{c}})$, Corollary \ref{cor:2} implies that, as $\theta_{\textrm{c}} \to 0$, the outage probability of a sparse network scales as $\Theta ( \theta_{\textrm{c}} )$. In this case, the outage performance is noise-limited. In contrast, the outage probability of a dense network scales as $\Theta ( \theta_{\textrm{c}}^{\frac{2}{\alpha}} )$. In this case, the outage performance is interference-limited.

Before ending this section, we validate the analytical results, particularly the CCDF of the uplink SINR (since we adopt an approximate approach for the analysis on the uplink SINR). As all the major analytical results presented in this paper are functions of SINR, it suffices to  validate the analytical SINR distributions  by simulation rather than repetitively validating each analytical expression which in turn is a function of SINR.  In Fig. \ref{fig:6},  we compare  the analytical uplink SINR CCDF (given in Prop. \ref{pro:4})  to the corresponding empirical distribution obtained from simulation using the hexagonal model. The simulation steps are described as follows.
\begin{enumerate}
\item Place the BSs according to a hexagonal grid in a large area $C$; the area of a hexagon equals $1/\lambda_{\textrm{b}}$.
\item Generate a random Poisson number $N$ with parameter $\lambda_{\textrm{c}}  |C|$.  
\item Generate $N$ points that are uniformly distributed in $C$; these $N$ points represent the cellular transmitters.
\item For each BS, it randomly schedules a cellular transmitter if there is at least one in its coverage region.
\item Determine the transmit power of each scheduled transmitter.
\item Generate independently the fading gains from each scheduled cellular transmitter to each BS.
\item Collect the SINR statistics of the cellular links located in the central hexagonal cells (to avoid boundary effect).
\item Repeat the above steps $10,000$ times. 
\end{enumerate}

Fig. \ref{fig:6} shows that the analytical results match the empirical results fairly well; the small gaps arise as we approximate  the hexagonal model using a PPP model with a guard radius. Recall that in Fig. \ref{fig:6}, $\lambda_{\textrm{b}} = (\pi 500^2)^{-1}$ m$^{-2}$, and thus the network is sparse. In the sparse regime, as established in Corollary 2, $\snr_m$ (or equivalently, $N_0$) has a considerable impact on the uplink SINR CCDF; Fig. \ref{fig:6} confirms the analytical result.

We next compare the  SINR distribution of a typical D2D link (given in Prop. \ref{pro:3})  to the corresponding empirical distribution obtained from Monte Carlo simulation. The results are shown in Fig. \ref{fig:7}, from which we can see that the analytical results closely match the empirical results as in this case no approximation is made in the analysis.

\begin{figure}[ht]
 \centering
 \subfigure[]{
  \includegraphics[width=8cm]{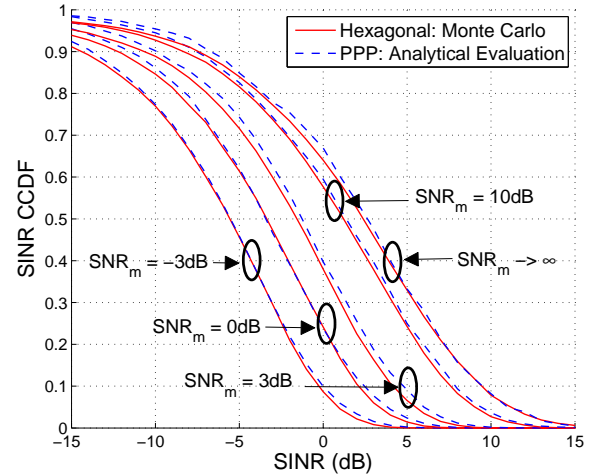}
   \label{fig:6}
   }
 \subfigure[]{
  \includegraphics[width=8.5cm]{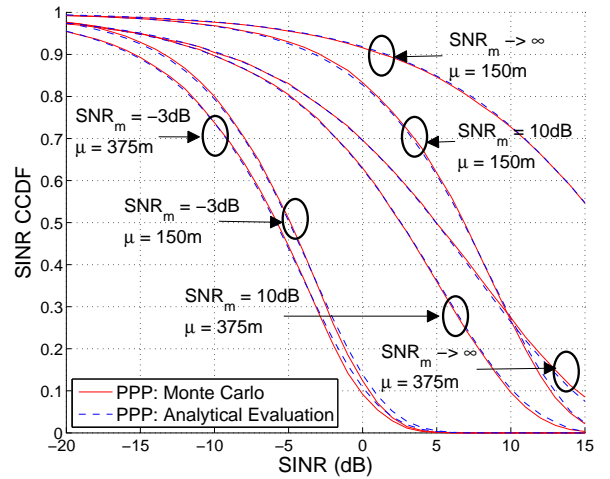}
   \label{fig:7}
   }
 \caption[]{Validation of the analytical SINR CCDF of cellular and D2D links.  Fig. \ref{fig:67}\subref{fig:6} shows the SINR CCDF of cellular links, while Fig. \ref{fig:67}\subref{fig:7} shows the SINR CCDF of D2D links.
  }
 \label{fig:67}
\end{figure}

\subsection{Optimizing Spectrum Partition}
\label{subsec:RopOver}

In this section we study how to choose the optimal spectrum partition factor $\eta^\star$ such that 
\begin{align}
\eta^\star = \arg \max_{\eta \in [0,1]} \quad & u (T_{\textrm{c}}, T_{\textrm{d}} ),
\end{align}
where  $u(T_c,T_d)$ is a utility function that can take different forms under different design metrics. In this paper we use the popular weighted proportional fair function: $w_{\textrm{c}} \log T_{\textrm{c}} + w_{\textrm{d}}\log T_{\textrm{d}}$, where $w_{\textrm{c}}, w_{\textrm{d}} > 0$ are weight factors such that $w_{\textrm{c}} + w_{\textrm{d}} = 1$. 

To optimize the spectrum partition, we first need the rate expressions for $T_{\textrm{c}}$ and $T_{\textrm{d}}$. For a given spectrum partition $\eta$, the (normalized) rate (bit/s/Hz) $T_{\textrm{c}}$ of cellular UEs equals  $R_{\textrm{c}}$ multiplied by the available spectrum resource $1 - \eta$.
In contrast,  the rate $T_{\textrm{d}}$ of potential D2D UEs equals $(1 - \eta) R_{\textrm{c}}$ if cellular mode is used; otherwise, i.e., D2D mode is used, it equals $\eta R_{\textrm{d}}$. To summarize, 
\begin{align}
T_{\textrm{c}} &= (1 - \eta) R_{\textrm{c}}  \notag \\
T_{\textrm{d}} &= (1 - \eta)e^{ -\xi \pi \mu^2 } R_{\textrm{c}} + \eta (1-e^{ -\xi \pi \mu^2 }) R_{\textrm{d}} .
\end{align}

Fig. \ref{fig:9} shows the average rates of cellular and potential D2D UEs as a function of D2D mode selection threshold $\mu$.\footnote{Note that for fair comparison, we normalize the transmit powers of cellular and D2D transmitters by taking into account their transmission bandwidths when plotting numerical results in this paper. For example, in Fig. \ref{fig:9} with $\eta = 0.2$, when $\snr_m = 10$dB, the average received SNR of cellular links and of D2D links are given by $\snr_m + 10\log_{10} \frac{1}{1-\eta} = 10.97$ dB, $\snr_m + 10\log_{10} \frac{1}{\eta} = 16.99$ dB, respectively.} As expected, the average rate of cellular UEs increases as $\mu$ increases. This is because with increasing $\mu$,  less potential D2D UEs choose cellular mode and correspondingly cellular UEs can be scheduled more often. In contrast, the average rate of potential D2D UEs first increases and then decreases as $\mu$ increases. This is because the average rate of potential D2D UEs is co-determined by its cellular-mode rate and D2D-mode rate: cellular-mode rate increases with $\mu$ while D2D-mode rate decreases with $\mu$ (due to the increased intra-tier interference). Fig. \ref{fig:9} also shows that with appropriate choice of $\mu$, potentials D2D UEs can enjoy much higher rate than cellular UEs.  Meanwhile, cellular UEs also benefit from offloading the traffic by D2D communication. 
%


In \cite{lin2013optimal}, we have derived the optimal weighted proportional fair spectrum partition, which reads as follows.
\begin{lem}
The optimal weighted proportional fair spectrum partition $\eta^\star$ is given by
\begin{align}
\eta^\star = 1 - \frac{w_c}{w_c + w_d} \cdot \frac{1}{1 - (e^{\xi \pi \mu^2}-1)^{-1} \frac{R_{\textrm{c}}}{R_{\textrm{d}}} }
\end{align}
if $R_{\textrm{d}} > \frac{w_c+w_d}{w_d} \frac{1}{e^{\xi \pi \mu^2}-1} R_{\textrm{c}} $; otherwise, $\eta^\star=0$. In particular, $\lim_{\mu \to \infty} \eta^\star = w_d$.
\label{pro:7}
\end{lem}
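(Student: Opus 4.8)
The plan is to treat the maximization of the weighted proportional fair utility $u(\eta) = w_{\textrm{c}} \log T_{\textrm{c}} + w_{\textrm{d}} \log T_{\textrm{d}}$ over $\eta \in [0,1]$ as a one‑dimensional concave program and solve it by a first‑order analysis. First I would substitute the rate expressions to write the objective purely as a function of $\eta$. Writing $p := e^{-\xi\pi\mu^2}$ for the cellular‑mode probability, $T_{\textrm{c}} = (1-\eta)R_{\textrm{c}}$ is affine and strictly decreasing in $\eta$, while $T_{\textrm{d}} = A + B\eta$ is affine in $\eta$ with intercept $A := p R_{\textrm{c}} > 0$ and slope $B := (1-p)R_{\textrm{d}} - p R_{\textrm{c}}$. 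Here $R_{\textrm{c}}$ and $R_{\textrm{d}}$ are treated as constants in the $\eta$‑optimization (their $\mu$‑dependence through $\lambda_{\textrm{c}}$ and through $c$ plays no role until the limit). Since the logarithm of an affine function is concave, and $T_{\textrm{c}}, T_{\textrm{d}} > 0$ at $\eta = 0$ while $T_{\textrm{c}} \to 0$ (so $u \to -\infty$) as $\eta \to 1^-$, the objective is strictly concave on its effective domain. Hence the maximizer is either the unique stationary point or the left endpoint $\eta = 0$, and no second‑order check is needed.

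Next I would compute $u'(\eta) = -\frac{w_{\textrm{c}}}{1-\eta} + \frac{w_{\textrm{d}} B}{A + B\eta}$ and set it to zero. Solving the resulting linear equation $w_{\textrm{c}}(A + B\eta) = w_{\textrm{d}} B(1-\eta)$ yields the stationary point $\eta^\star = \frac{w_{\textrm{d}} B - w_{\textrm{c}} A}{(w_{\textrm{c}} + w_{\textrm{d}})B}$ (valid when $B > 0$). To reach the published closed form I would substitute $A$ and $B$ back and use the identity $\frac{p}{1-p} = (e^{\xi\pi\mu^2}-1)^{-1}$: first rewrite $\eta^\star = 1 - \frac{w_{\textrm{c}}}{w_{\textrm{c}}+w_{\textrm{d}}}\cdot\frac{(1-p)R_{\textrm{d}}}{B}$, then divide numerator and denominator of $\frac{(1-p)R_{\textrm{d}}}{B}$ by $(1-p)R_{\textrm{d}}$ to produce the factor $\big(1 - (e^{\xi\pi\mu^2}-1)^{-1}\frac{R_{\textrm{c}}}{R_{\textrm{d}}}\big)^{-1}$. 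This is the routine algebra step.

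The boundary case is handled by the sign of $u'(0) = -w_{\textrm{c}} + w_{\textrm{d}} B/A$: the stationary point lies in $(0,1)$ exactly when $u'(0) > 0$, i.e. $w_{\textrm{d}} B > w_{\textrm{c}} A$. Expanding $A$ and $B$, this reads $w_{\textrm{d}}(1-p)R_{\textrm{d}} > (w_{\textrm{c}}+w_{\textrm{d}}) p R_{\textrm{c}}$, which rearranges to the stated threshold $R_{\textrm{d}} > \frac{w_{\textrm{c}}+w_{\textrm{d}}}{w_{\textrm{d}}}\frac{1}{e^{\xi\pi\mu^2}-1}R_{\textrm{c}}$; note this same inequality forces $B > 0$, so the two conditions are consistent and the upper endpoint cannot bind since $\eta^\star < \frac{w_{\textrm{d}}}{w_{\textrm{c}}+w_{\textrm{d}}} < 1$. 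When the threshold fails, $u'(0) \le 0$ and concavity gives $\eta^\star = 0$. For the limiting statement I would send $\mu \to \infty$, so $p \to 0$, $A \to 0$, and $B \to R_{\textrm{d}} > 0$; the threshold is then eventually met and $\eta^\star \to \frac{w_{\textrm{d}}}{w_{\textrm{c}}+w_{\textrm{d}}} = w_{\textrm{d}}$, using $w_{\textrm{c}}+w_{\textrm{d}} = 1$.

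The main obstacle is not conceptual — the concavity makes the optimization essentially automatic — but bookkeeping. The genuinely fiddly part is reshaping the stationary point into the exact published form, matching the signs and the placement of $\frac{R_{\textrm{c}}}{R_{\textrm{d}}}$, and then confirming that the positivity condition $w_{\textrm{d}} B > w_{\textrm{c}} A$ coincides with the stated threshold after clearing denominators. In the limit I would also note that $R_{\textrm{d}}$ converges to the positive constant $R_{\textrm{d},\min}$ and $R_{\textrm{c}}$ to a finite limit, which guarantees $A \to 0$ while $B$ stays bounded away from zero, so the interior formula applies and collapses cleanly to $w_{\textrm{d}}$.
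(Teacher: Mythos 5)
Your proof is correct: the concavity of $w_{\textrm{c}}\log(1-\eta)+w_{\textrm{d}}\log(A+B\eta)$, the first-order condition, the algebraic reduction to the published form via $A+B=(1-p)R_{\textrm{d}}$ and $p/(1-p)=(e^{\xi\pi\mu^2}-1)^{-1}$, the threshold from the sign of $u'(0)$, and the limit $\mu\to\infty$ all check out. Note that this paper states the lemma without proof (deferring to the cited conference version \cite{lin2013optimal}); your derivation is the standard concave one-dimensional optimization argument that the result rests on, so there is nothing to flag.
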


From Lemma \ref{pro:7}, we can see that if $\mu \to \infty$, i.e., potential D2D UEs are restricted to use D2D mode,  the optimal partition $\eta^\star$ converges to $w_{\textrm{d}}$. So the optimal partition $\eta^\star$ simply equals the weight we assign to the potential D2D UEs. 
In Fig. \ref{fig:11}, we plot the utility value vs. $\eta$ under different values of $q$,  the proportion of potential D2D UEs. It can be seen that the optimal $\eta^\star = 0.4 = w_{\textrm{d}}$, which is independent of $q$. This plot validates Lemma \ref{pro:7}.

\begin{figure}[ht]
 \centering
 \subfigure[]{
  \includegraphics[width=8cm]{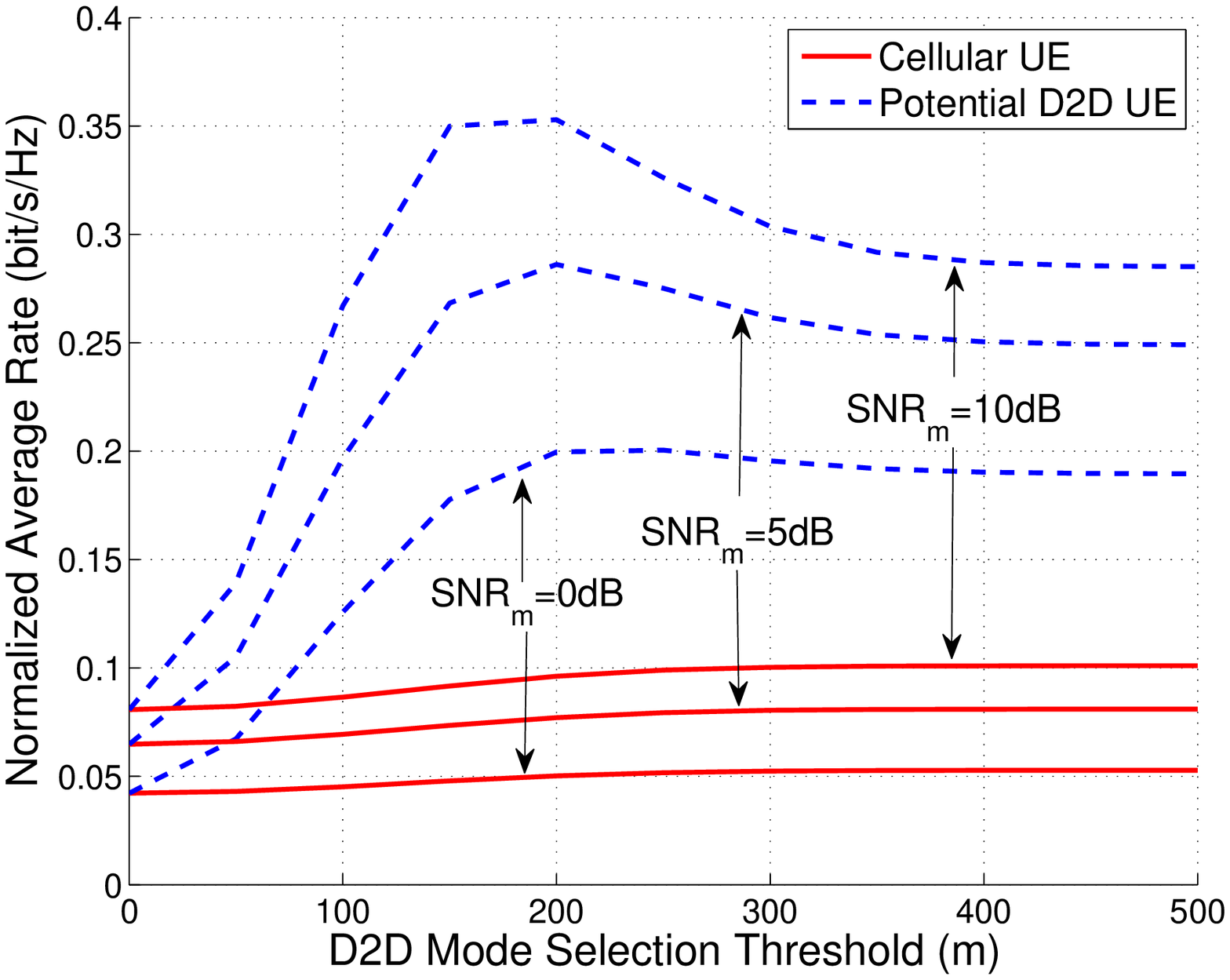}
   \label{fig:9}
   }
 \subfigure[]{
  \includegraphics[width=8cm]{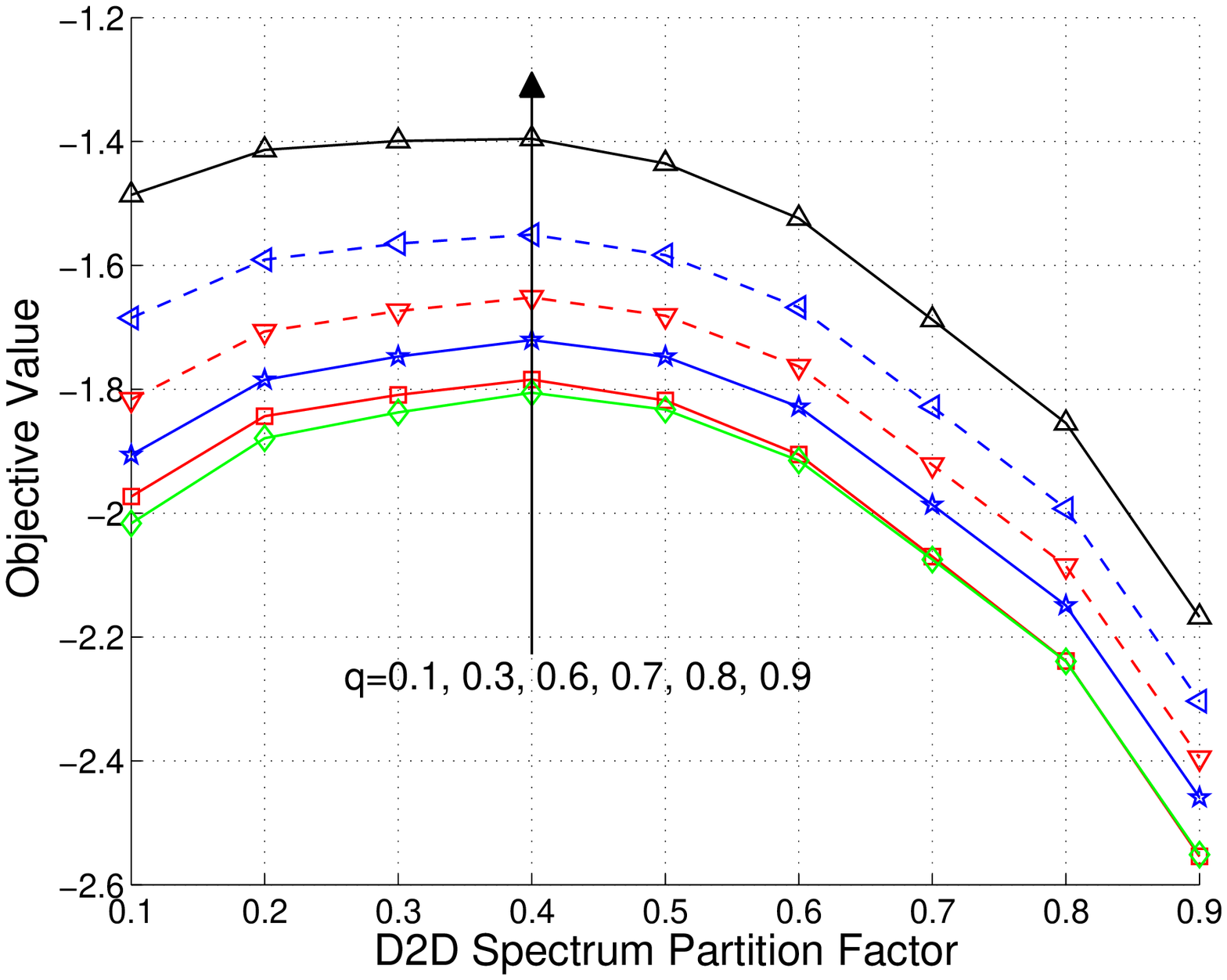}
   \label{fig:11}
   }
 \caption[]{Overlay in-band D2D: Fig. \ref{fig:911}\subref{fig:9} shows the average rates of cellular and potential D2D UEs vs. D2D mode selection threshold. Fig. \ref{fig:911}\subref{fig:11} shows the utility value vs. D2D spectrum partition factor $\eta$ under different values of $q$, the proportion of potential D2D UEs.
}
\label{fig:911}
\end{figure}

Note that the optimal partition $\eta^{\star}$ in Lemma \ref{pro:7} is given for a fixed D2D mode selection threshold $\mu$. With $\eta^{\star} (\mu)$ we can further optimize the objective function $u (T_{\textrm{c}}, T_{\textrm{d}} )$ by choosing the optimal $\mu^{\star}$. With the derived $\eta^{\star} (\mu)$ the objective function $u (T_{\textrm{c}}, T_{\textrm{d}} )$ is only a function of the scalar variable $\mu$. Thus, the optimal $\mu^{\star}$ can be numerically found  efficiently, and the computed $(\mu^{\star}, \eta^{\star} (\mu^{\star}))$  gives the optimal  system design choice. 

\section{Analysis of Underlay In-Band D2D}
\label{sec:underlay}

\subsection{Link Spectral Efficiency}

Considering the random access of D2D in both frequency and time domains, the effective D2D interferers  constitute a homogeneous PPP with density $\kappa \beta \lambda_{\textrm{d}}$, a PPP thinned from $\Phi_{\textrm{d}}$ with thinning probability $\kappa \beta$. Denote this thinned PPP by $\kappa \beta \Phi_{\textrm{d}}$. Considering further the interference from cellular transmitters $\Phi_{\textrm{c,a}}$ with density $\lambda_{\textrm{b}}$, the interference at the typical D2D receiver is given by
\begin{align}
I_{\textrm{d}} = \sum_{X_i \in \kappa \beta \Phi_{\textrm{d}} \setminus \{o\} } \hat{P}_{\textrm{d},i} G_{i} \|X_{i}\|^{-\alpha} +  \sum_{X_i \in \Phi_{\textrm{c,a}} } P_{\textrm{c},i} G_{i} \|X_{i}\|^{-\alpha} . \notag 
\end{align}
\begin{pro}
With underlay in-band D2D, the CCDF of the SINR of D2D links is given by
\begin{align}
&\bbbb P(\sinr \geq x) = \notag \\
&\exp \left( - N_0   x - c \beta x^{ \frac{2}{\alpha} }  -\frac{ 1 }{2 \textrm{sinc} (\frac{2}{\alpha}  ) }  (\beta  x)^{ \frac{2}{\alpha} }  \right), \quad x \geq 0 .
\end{align}
Further, the spectral efficiency $R_{\textrm{d}}$ of D2D links is given by
\begin{align}
R_{\textrm{d}}
&= \kappa \int_0^\infty   \frac{ e^{ - N_0   x}}{1 + x}   \exp \left(- c \beta x^{ \frac{2}{\alpha} }  -\frac{ (\beta  x)^{ \frac{2}{\alpha} } }{2 \textrm{sinc} (\frac{2}{\alpha}  ) }    \right)   \dint x.
\label{eq:161}
\end{align}
\label{pro:8}
\end{pro}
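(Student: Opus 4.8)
The plan is to reuse the conditionally-Gaussian SINR framework of Corollary~\ref{thm:2}: since the typical link employs channel inversion, the desired signal power is $W=G_0\sim\textrm{Exp}(1)$, so that $\bbbb P(\sinr\geq x)=e^{-N_0 x}\mathcal{L}_{I_{\textrm{d}}}(x)$ and $R_{\textrm{d}}=\kappa\,\bbbb E[\log(1+\sinr)]$ follows directly from (\ref{eq:13}), the prefactor $\kappa$ accounting for the Aloha duty cycle. Everything therefore reduces to evaluating the Laplace transform of the aggregate interference $I_{\textrm{d}}$. Because the thinned D2D field $\kappa\beta\Phi_{\textrm{d}}$ and the cellular field $\Phi_{\textrm{c,a}}$ are independent marked PPPs, $\mathcal{L}_{I_{\textrm{d}}}$ factors into a D2D part and a cellular part, and I would treat the two separately.

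For the D2D part I would simply invoke Prop.~\ref{pro:3}. The only change relative to the overlay is that frequency hopping thins the interfering D2D intensity from $\kappa\lambda_{\textrm{d}}$ to $\kappa\beta\lambda_{\textrm{d}}$; since the PGFL exponent is linear in the intensity, the overlay exponent $c\,x^{2/\alpha}$ is replaced by $c\beta\,x^{2/\alpha}$ with the \emph{same} constant $c$ of (\ref{eq:cc}). Here the interfering D2D transmitters share the typical link's $\beta B$ subchannels, so no rescaling of the argument $x$ occurs. The constant $c$ again arises from $\pi\,\kappa\lambda_{\textrm{d}}\,\bbbb E[\hat P_{\textrm{d}}^{2/\alpha}]/\textrm{sinc}(2/\alpha)$ with $\bbbb E[\hat P_{\textrm{d}}^{2/\alpha}]=\bbbb E[D^2\mid D<\mu]$ evaluated as in Lemma~\ref{pro:1}.

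For the cellular part I would compute the Laplace transform of $\sum_{X_i\in\Phi_{\textrm{c,a}}}P_{\textrm{c},i}G_i\|X_i\|^{-\alpha}$ by the standard PPP/Rayleigh argument integrated over all of $\bbbb R^2$ (unlike the uplink in Prop.~\ref{pro:4}, a D2D receiver enjoys no guard region). Using the reflection-formula identity $\Gamma(1+\tfrac{2}{\alpha})\Gamma(1-\tfrac{2}{\alpha})=1/\textrm{sinc}(\tfrac{2}{\alpha})$ together with $\bbbb E[P_{\textrm{c}}^{2/\alpha}]=\bbbb E[L_{\textrm{c}}^2]=R^2/2=1/(2\pi\lambda_{\textrm{b}})$ (the cellular transmitter being uniform on the disk of radius $R=\sqrt{1/(\pi\lambda_{\textrm{b}})}$), the density $\lambda_{\textrm{b}}$ cancels and produces an exponent $\frac{1}{2\,\textrm{sinc}(2/\alpha)}\,s^{2/\alpha}$. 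The spectrum-access bookkeeping then enters precisely here: because the cellular transmitter occupies the whole band while the typical D2D link concentrates its power on its $\beta B$ subchannels, the cellular interference is attenuated by $\beta$ relative to the desired signal, i.e. the Laplace transform is evaluated at $\beta x$ rather than $x$, giving the term $\frac{1}{2\,\textrm{sinc}(2/\alpha)}(\beta x)^{2/\alpha}$.

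Multiplying the noise factor $e^{-N_0 x}$ by the two Laplace factors yields the stated CCDF, and substituting into (\ref{eq:13}) with the prefactor $\kappa$ gives (\ref{eq:161}). I expect the main obstacle to be exactly this normalization step: carefully tracking how $\beta$ acts \emph{asymmetrically} on the two interferer populations — multiplying the D2D interference intensity (since those interferers hop onto the same $\beta B$ subchannels as the signal) but rescaling the cellular interference power seen per D2D subchannel (since cellular spreads over the full band) — and verifying that, under the convention that $N_0$ is the equivalent noise power over the D2D link's occupied bandwidth, the noise term remains $N_0 x$ with no $\beta$. The remaining integrals are routine.
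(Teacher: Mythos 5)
Your proposal is correct and takes essentially the same route as the paper's proof: Corollary \ref{thm:2} plus a factorized PGFL evaluation of the two independent interferer fields, with the cellular field integrated over the whole plane (no guard region) and the required moments $\bbbb E[\hat P_{\textrm{d}}^{2/\alpha}]$, $\bbbb E[P_{\textrm{c}}^{2/\alpha}]$ taken from Lemma \ref{pro:1}. The only difference is bookkeeping: the paper splits each transmitter's power per subchannel ($\hat P_{\textrm{d}} = L_{\textrm{d}}^{\alpha}/(\beta B)$, $P_{\textrm{c}} = L_{\textrm{c}}^{\alpha}/B$) and evaluates $\mathcal{L}_{I_{\textrm{d}}}(\beta B x)$, whereas you work in signal-normalized units with cellular interference attenuated by $\beta$ and D2D marks unchanged from the overlay --- the two conventions cancel to the identical exponents $c\beta x^{2/\alpha}$ and $\frac{1}{2\textrm{sinc}(2/\alpha)}(\beta x)^{2/\alpha}$, and your stated convention on $N_0$ matches the one the paper implicitly uses for the noise term.
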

\begin{proof}
See Appendix \ref{proof:pro:8}.
\end{proof}

Now let us consider a typical  uplink. With underlay in-band D2D, the interferers are out-of-cell cellular transmitters $\Phi_{\textrm{c,a}} \cap \mathcal{A}^{c} $ and D2D transmitters in $\kappa \beta \Phi_{\textrm{d}}$. The interference at the typical BS is given by
\begin{align}
I_{\textrm{c}} = \sum_{X_i \in \Phi_{\textrm{c,a}} \cap \mathcal{A}^{c} } P_{\textrm{c},i} G_{i} \|X_{i}\|^{-\alpha} + \sum_{X_i \in \kappa \beta \Phi_{\textrm{d}}  } \hat{P}_{\textrm{d},i} G_{i} \|X_{i}\|^{-\alpha} .  \notag
\end{align}

\begin{pro}
With underlay in-band D2D, the CCDF of the SINR of cellular links is given by
\begin{align}
&\bbbb P(\sinr \geq x) = \exp \bigg( - N_0  x - c \beta^{1-\frac{2}{\alpha}}  x^{ \frac{2}{\alpha} } -  2\pi \lambda_{\textrm{b}}  \notag \\
&\times \int_{R}^\infty \! \! \left(\! 1 \! \!  -  {}_2F_1 (1, \frac{2}{\alpha}; 1+\frac{2}{\alpha}; -  \frac{x}{(R/r)^{\alpha}}   ) \!  \right) \!  r \!\dint r \! \bigg)  ,  x \geq 0 .
\end{align}
Further, the spectral efficiency $R_{\textrm{c}}$ of cellular links is given by
\begin{align}
&R_{\textrm{c}}
= 
\frac{\lambda_{\textrm{b}}}{\lambda_{\textrm{c}} } ( 1 - e^{-\frac{\lambda_{\textrm{c}}}{\lambda_{\textrm{b}}}} )  \int_0^\infty   \frac{ e^{ - N_0  x}}{1 + x}  \exp \bigg(-  c \beta^{1-\frac{2}{\alpha}}   x^{ \frac{2}{\alpha} }  - 2\pi \lambda_{\textrm{b}} \notag \\
&\times \int_{R}^\infty \left(1 - {}_2F_1 (1, \frac{2}{\alpha}; 1+\frac{2}{\alpha}; -  \frac{x}{(R/r)^{\alpha}}   ) \right)  r\dint r \bigg)   \dint x .
\label{eq:1001}
\end{align}
\label{pro:9}
\end{pro}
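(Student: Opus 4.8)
The plan is to reuse the two-step recipe already established for Prop.~\ref{pro:4}: first pin down the SINR CCDF through the Laplace transform of the aggregate interference $I_{\textrm{c}}$, and then convert to the ergodic rate via Corollary~\ref{thm:2} together with the round-robin scheduling weight. Since the typical cellular uplink uses channel inversion $P_{\textrm{c}}=L_{\textrm{c}}^{\alpha}$, the received signal power is normalized to $1$ and the numerator of the $\sinr$ carries only the Rayleigh fading, so $W=G_0\sim\textrm{Exp}(1)$ independently of the interferer configuration. Exploiting $W\sim\textrm{Exp}(1)$, I would write $\bbbb P(\sinr\geq x)=\bbbb P\big(W\geq x(I_{\textrm{c}}+N_0)\big)=e^{-N_0 x}\,\mathcal{L}_{I_{\textrm{c}}}(x)$, which reduces the whole CCDF to evaluating $\mathcal{L}_{I_{\textrm{c}}}(x)=\bbbb E[e^{-xI_{\textrm{c}}}]$.

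The key structural fact is that $I_{\textrm{c}}$ splits into two contributions driven by \emph{independent} point processes, the out-of-cell cellular interferers $\Phi_{\textrm{c,a}}\cap\mathcal{A}^{c}$ and the active underlaid D2D transmitters $\kappa\beta\Phi_{\textrm{d}}$, so the Laplace transform factorizes as $\mathcal{L}_{I_{\textrm{c}}}(x)=\mathcal{L}_{I^{\textrm{cell}}}(x)\,\mathcal{L}_{I^{\textrm{D2D}}}(x)$. The first factor is identical to the one already obtained in Prop.~\ref{pro:4}: a PPP of density $\lambda_{\textrm{b}}$ outside the guard disk $\mathcal{A}=\mathcal{B}(0,R)$ with channel-inverted interferer powers yields exactly the $\exp\!\big(-2\pi\lambda_{\textrm{b}}\int_{R}^{\infty}(1-{}_2F_1(\cdots))\,r\dint r\big)$ term, so I would invoke that derivation verbatim. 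The second factor is the genuinely new piece. Applying the standard Rayleigh-fading PPP Laplace-transform identity to the thinned process of density $\kappa\beta\lambda_{\textrm{d}}$ over all of $\bbbb R^2$ (there is no guard region protecting the BS from nearby D2D transmitters, which is why this factor is a clean power law rather than a hypergeometric) produces a term of the form $\exp\!\big(-\kappa\beta\lambda_{\textrm{d}}\,\tfrac{\pi}{\textrm{sinc}(2/\alpha)}\,\bbbb E[P_{\textrm{eff}}^{2/\alpha}]\,x^{2/\alpha}\big)$, where the effective per-interferer power $P_{\textrm{eff}}$ is the crux.

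The main obstacle, and the reason the D2D exponent is $\beta^{1-2/\alpha}$ here rather than the plain $\beta$ seen at a D2D receiver in Prop.~\ref{pro:8}, is the bandwidth/power bookkeeping. I would argue it on a single subchannel: the reference signal is a cellular transmission that spreads its channel-inversion power over the full band of $B$ subchannels, whereas each interfering D2D transmitter spreads its power over only $\beta B$ subchannels; measured against the cellular reference normalization, the per-subchannel D2D interference power is therefore inflated by $1/\beta$, i.e. $P_{\textrm{eff}}=\hat{P}_{\textrm{d}}/\beta$. Since $\bbbb E[(\hat{P}_{\textrm{d}}/\beta)^{2/\alpha}]=\beta^{-2/\alpha}\,\bbbb E[\hat{P}_{\textrm{d}}^{2/\alpha}]$, the density factor $\beta$ combines with $\beta^{-2/\alpha}$ to give $\beta^{1-2/\alpha}$. (By contrast, at a D2D receiver the reference signal is itself spread over $\beta B$ subchannels, the two $\beta$ factors cancel, and one recovers the plain $\beta$.) The leftover moment $\kappa\lambda_{\textrm{d}}\,\bbbb E[\hat{P}_{\textrm{d}}^{2/\alpha}]=\kappa q\lambda\,\bbbb E[D^{2}\mathbf{1}(D<\mu)]$ is evaluated with the density (\ref{eq:101}); the substitution $u=\xi\pi x^{2}$ turns it into $\tfrac{1}{\xi\pi}\int_{0}^{\xi\pi\mu^{2}}u\,e^{-u}\dint u$, which, divided by $\textrm{sinc}(2/\alpha)$, reproduces precisely the constant $c$ of (\ref{eq:cc}). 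This gives $\mathcal{L}_{I^{\textrm{D2D}}}(x)=\exp(-c\,\beta^{1-2/\alpha}x^{2/\alpha})$, and multiplying the two factors yields the stated CCDF.

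For the spectral efficiency (\ref{eq:1001}), I would feed $\mathcal{L}_{I_{\textrm{c}}}(x)$ into Corollary~\ref{thm:2} to obtain the ergodic rate of an \emph{active} cellular link, $\bbbb E[\log(1+\sinr)]=\int_{0}^{\infty}\tfrac{e^{-N_0 x}}{1+x}\,\mathcal{L}_{I_{\textrm{c}}}(x)\dint x$, noting that this active rate is independent of the number of UEs in the typical cell because only one cellular transmitter is active per cell and the signal is channel-inverted. The last step accounts for round-robin sharing: conditioned on the typical UE, the number $K$ of other cellular UEs in its cell (area $1/\lambda_{\textrm{b}}$) is Poisson with mean $\lambda_{\textrm{c}}/\lambda_{\textrm{b}}$, and $\bbbb E[1/(1+K)]=\tfrac{\lambda_{\textrm{b}}}{\lambda_{\textrm{c}}}(1-e^{-\lambda_{\textrm{c}}/\lambda_{\textrm{b}}})$. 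This is exactly the prefactor appearing in Prop.~\ref{pro:4}, and multiplying it against the Corollary~\ref{thm:2} integral delivers (\ref{eq:1001}), with $\lambda_{\textrm{c}}=(1-q)\lambda+q\lambda e^{-\xi\pi\mu^{2}}$ as before.
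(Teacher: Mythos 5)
Your proposal is correct and follows essentially the same route as the paper's proof: split $I_{\textrm{c}}$ into the independent out-of-cell cellular and underlaid D2D contributions, reuse the Prop.~\ref{pro:4} hypergeometric factor for the former, apply the PPP Laplace-transform identity to the $\beta$-thinned D2D process with $\beta$-rescaled power (so that $\beta\cdot\beta^{-2/\alpha}=\beta^{1-2/\alpha}$ multiplies $c$), and finish with Corollary~\ref{thm:2} and the Poisson round-robin factor $\frac{\lambda_{\textrm{b}}}{\lambda_{\textrm{c}}}(1-e^{-\lambda_{\textrm{c}}/\lambda_{\textrm{b}}})$. The only difference is cosmetic bookkeeping of the bandwidth factors -- the paper keeps per-subchannel powers ($\hat{P}_{\textrm{d}}=\frac{1}{\beta B}L_{\textrm{d}}^{\alpha}$, $P_{\textrm{c}}=\frac{1}{B}L_{\textrm{c}}^{\alpha}$) and evaluates $\mathcal{L}_{I_{\textrm{c}}}(Bx)$, whereas you normalize against the cellular signal and evaluate $\mathcal{L}_{I_{\textrm{c}}}(x)$ with effective D2D interferer power $L_{\textrm{d}}^{\alpha}/\beta$ -- which is algebraically identical.
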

\begin{proof}
See Appendix \ref{proof:pro:9}.
\end{proof}

Prop. \ref{pro:8} (resp. Prop. \ref{pro:9}) implies that the spectral efficiency $R_{\textrm{d}}$ of D2D links (resp. $R_{\textrm{c}}$ of cellular links) decreases as $\beta$ increases. In other words, with larger $\beta$, the increased D2D interferer density in each subchannel has a more significant impact than the decreased transmit power per subchannel of D2D interferers. To sum up, from the perspective of maximizing the spectral efficiency of either D2D  or cellular links, the design insight here is that underlay in-band D2D should access small bandwidth with high power density rather than spreading the power over large bandwidth. However, small $\beta$ limits the spectrum resource available to the D2D transmissions, which in turn limits the D2D throughput or rate.

\subsection{Optimizing Spectrum Access}

As in the case of overlay, we choose an optimal spectrum access factor $\beta^\star$ in underlay case such that 
\begin{align}
\beta^\star = \arg \max_{\beta \in [0,1]} \quad & w_{\textrm{c}}\log T_{\textrm{c}} + w_{\textrm{d}}\log T_{\textrm{d}} .
\end{align}
To this end, we first need the rate expressions for $T_{\textrm{c}}$ and $T_{\textrm{d}}$. By definition, it is easy to see that
\begin{align}
T_{\textrm{c}} &= R_{\textrm{c}} , \quad
T_{\textrm{d}} &= e^{ -\xi \pi \mu^2 } R_{\textrm{c}} + \beta  (1-e^{ -\xi \pi \mu^2 }) R_{\textrm{d}},
\label{eq:20}
\end{align}
where $R_{\textrm{c}}$ and $R_{\textrm{d}}$ are given in (\ref{eq:1001}) and (\ref{eq:161}), respectively.


Fig. \ref{fig:30} shows the average rates of cellular and potential D2D UEs as a function of $\mu$ in the underlay scenario. 
Recall in the overlay case, the rate of cellular UEs increases with $\mu$ due to D2D offloading gain. In contrast, here the rate of cellular UEs stays almost constant or even slightly decreases with $\mu$. This is because cellular UEs now suffer from the interference caused by the underlaid D2D transmissions; this offsets the offloading gain. Fig. \ref{fig:30} also shows that larger $\beta$ leads to higher rate of potential D2D UEs but lower rate of cellular UEs, which is pretty intuitive.  


Next we optimize the spectrum access. From (\ref{eq:20}), we can see that the spectrum access factor  $\beta$ in the underlay scenario has a much more complicated impact on $T_{\textrm{c}}$ and $T_{\textrm{d}}$ than  $\eta$  does in the overlay scenario. As a result, a closed-form solution for $\beta^\star$ is hard to obtain. Nevertheless,  the optimization problem is of single variable and can be numerically solved. In Fig. \ref{fig:32}, we plot the utility value vs. $\beta$ under different values of $q$,  the proportion of potential D2D UEs. It can be seen that the optimal $\eta^\star$ decreases as $q$ increases.


\begin{figure}[ht]
 \centering
 \subfigure[]{
  \includegraphics[width=8cm]{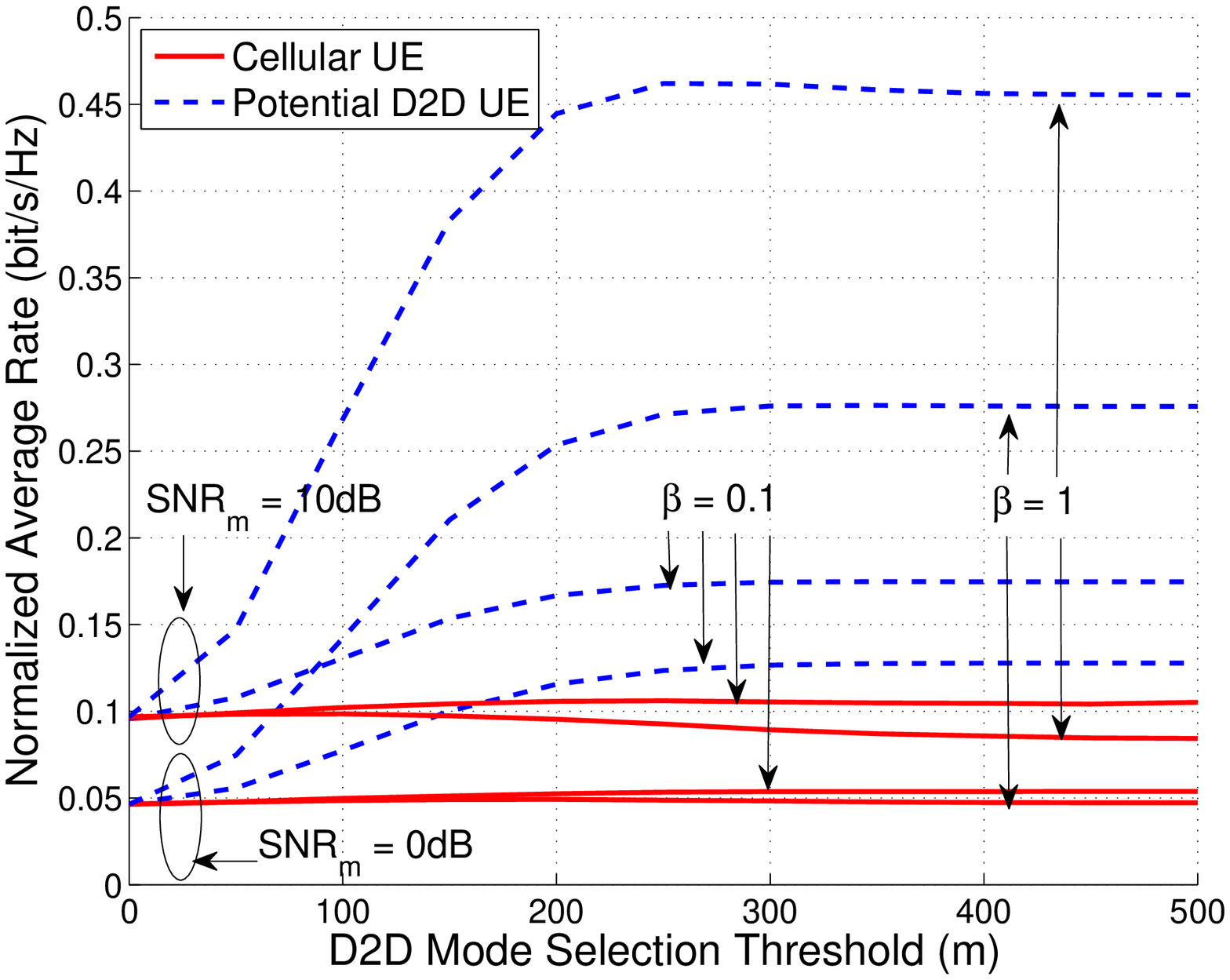}
   \label{fig:30}
   }
 \subfigure[]{
  \includegraphics[width=8cm]{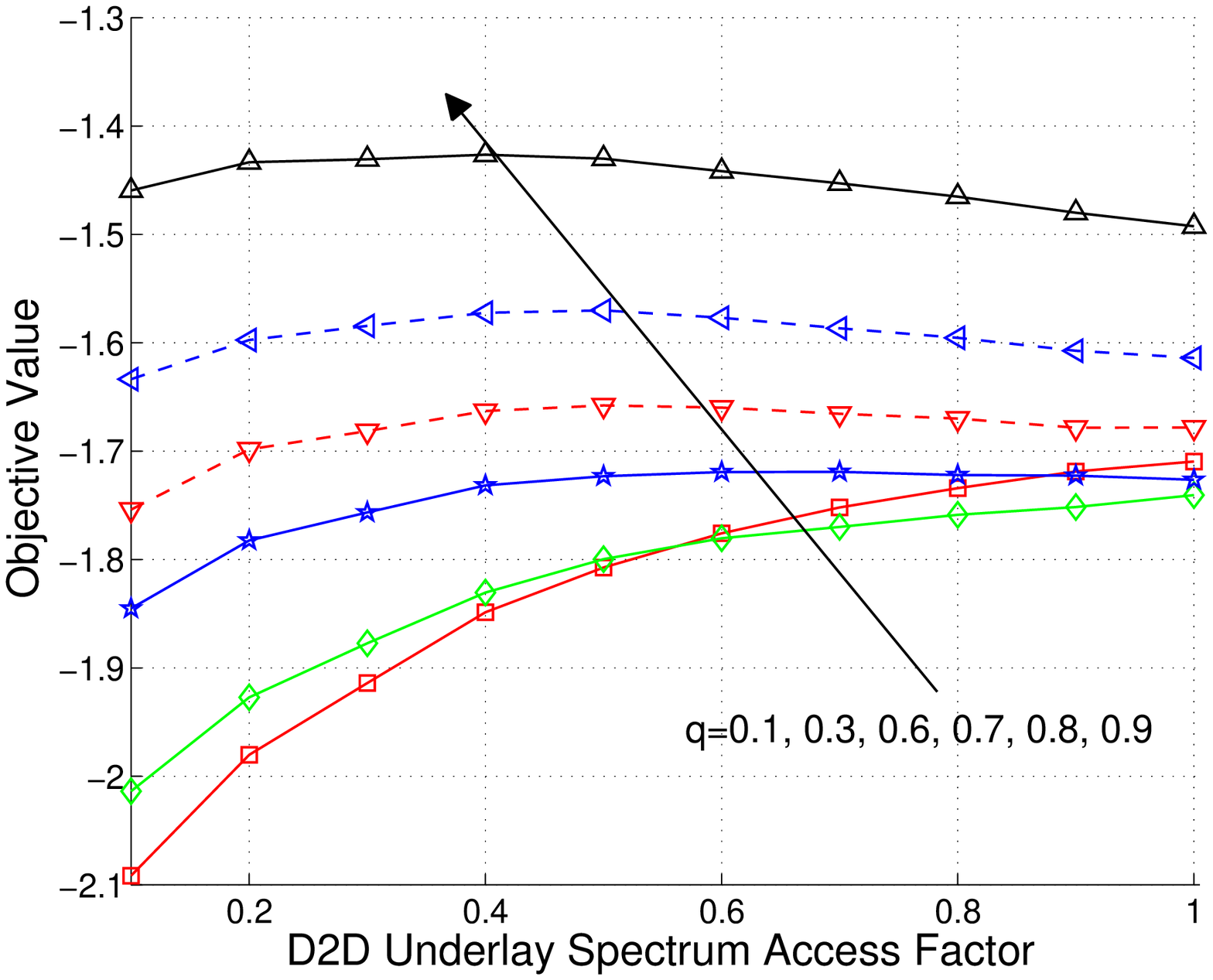}
   \label{fig:32}
   }
 \caption[]{Underlay in-band D2D: Fig. \ref{fig:3032}\subref{fig:30} shows the average rates of cellular and potential D2D UEs vs. D2D mode selection threshold. Fig. \ref{fig:3032}\subref{fig:32} shows the utility value vs. D2D spectrum access factor $\beta$ under different values of $q$, the proportion of potential D2D UEs.
}
\label{fig:3032}
\end{figure}

Thus far, spectrum sharing is optimized from the rate perspective. In practice, D2D spectrum sharing may have to take into account other factors. Take the underlay scenario for example. In order to protect the cellular transmissions, we may have to limit the proportion of the spectrum that can be accessed by D2D. Specifically, assume that D2D transmissions have a target outage probability $\epsilon_{\textrm{d}}$. Then D2D coverage probability must satisfy
\begin{align}
& \bbbb P ( \frac{W}{I_{\textrm{d}}  + N_0 } \geq \theta_{\textrm{d}} ) = \bbbb E[ e^{ - \beta B \theta_{\textrm{d}} (I_{\textrm{d}} + N_0) } ] \geq 1 -  \epsilon_{\textrm{d}}  \quad \Rightarrow    \notag  \\
& \left( N_0 B  \theta_{\textrm{d}}   + \theta_{\textrm{d}}^{ \frac{2}{\alpha} } c (\mu)  \right) \beta +  \frac{ \theta_{\textrm{d}}^{ \frac{2}{\alpha} } }{2 \textrm{sinc} (\frac{2}{\alpha}  ) }  \beta^{\frac{2}{\alpha}}  \leq   \log( \frac{1}{1 - \epsilon_{\textrm{d}}} ) ,
\label{eq:18}
\end{align}
where $\theta_{\textrm{d}}$  is the SINR threshold for successful D2D transmissions, and $c(\mu) = c$ (given in Prop. \ref{pro:3}) monotonically increases as $\mu$ increases. 
Inequality (\ref{eq:18}) reveals the tradeoff between $\beta$ and $\mu$.  In particular,  if each D2D transmission has access to more spectrum, i.e. larger $\beta$, the signal power is spread over wider channel bandwidth and thus the effective SINR gets ``thinner'' in each subchannel. This in turn implies that given link reliability requirement on $\theta_{\textrm{d}}$ and  $\epsilon_{\textrm{d}}$, less cochannel D2D transmissions can be supported, i.e., $\mu$ has to be decreased to make more potential D2D UEs use cellular mode rather than D2D mode. 

Similarly, if cellular transmissions have a target outage probability $\epsilon_{\textrm{c}}$,  the cellular coverage probability must satisfy
\begin{align}
&\bbbb P ( \frac{W}{I_{\textrm{c}} + N_0 } \geq \theta_{\textrm{c}} ) = \bbbb E[ e^{ -\theta_{\textrm{c}} (I_{\textrm{c}} + N_0) } ] \geq 1 -  \epsilon_{\textrm{c}} \quad \Rightarrow   \notag  \\
&  \theta_{\textrm{c}}^{ \frac{2}{\alpha} } c( \mu )  \beta^{1-\frac{2}{\alpha}} \leq    \log( \frac{1}{1 - \epsilon_{\textrm{c}}} ) - N_0 B \theta_{\textrm{c}} - 2\pi \lambda_{\textrm{b}}  \notag \\
& \times \int_{R}^\infty \left(1 - {}_2F_1 (1, \frac{2}{\alpha}; 1+\frac{2}{\alpha}; -  \frac{\theta_{\textrm{c}}}{(R/r)^{\alpha}}   )  \right)  r\dint r .
\label{eq:19}
\end{align}
As in (\ref{eq:18}), a joint constraint on  $\beta$  and $\mu$ is imposed by (\ref{eq:19}); a tradeoff between $\beta$  and $\mu$ exists. Incorporating the constraints (\ref{eq:18}) and (\ref{eq:19}) into the D2D underlay spectrum access optimization problem is an interesting topic for future work.

\section{Overlay vs. Underlay: A Case Study}
\label{sec:case}

In this section we provide a case study to compare the rate performance of overlay with that of underlay. The results are shown in Fig. \ref{fig:4042}, in which the label ``Overall'' indicates the rate performance averaged across both cellular UEs and potential D2D UEs. In Fig. \ref{fig:4042}, the percentage of D2D links equals $q (1-e^{- \xi \pi \mu^2 }) = 0.2   (1 -e^{- \frac{10}{\pi 500^2}  \pi 200^2 } ) = 16\%$. Even with only $16\%$ of the links being D2D links, the overall rate increases remarkably in both overlay and underlay due to the high rates of D2D links. Fig. \ref{fig:4042} further shows that a small $\eta$ (say, $0.3$) in overlay leads to as good rate performance of potential D2D UEs as its counterpart in underlay with a large  $\beta$ (say, $0.9$) because D2D UEs in  overlay are free of cellular interference.

\begin{figure}[ht]
 \centering
 \subfigure[]{
  \includegraphics[width=8cm]{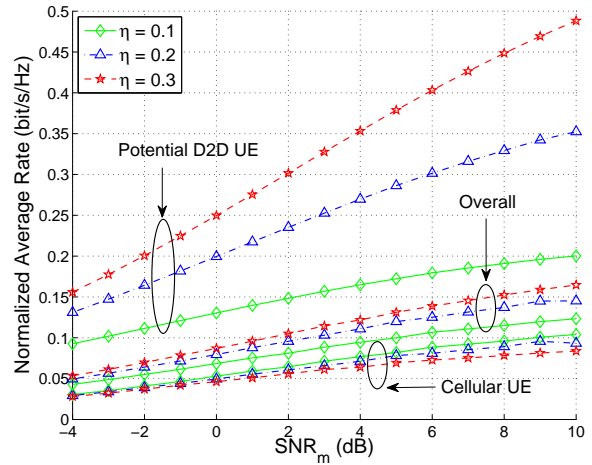}
   \label{fig:40}
   }
 \subfigure[]{
  \includegraphics[width=8cm]{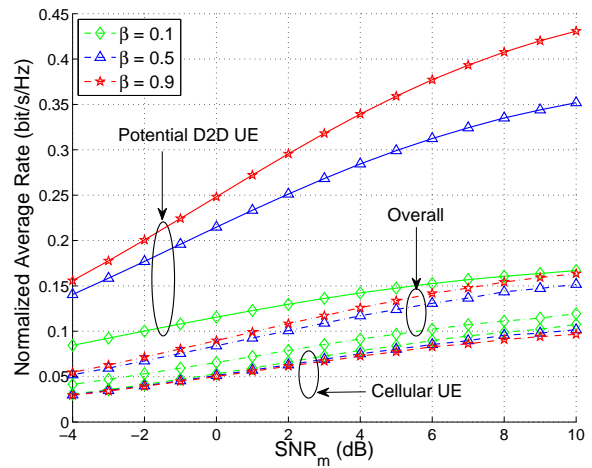}
   \label{fig:42}
   }
 \caption[]{Overlay vs. underlay: Fig. \ref{fig:4042}\subref{fig:40} and Fig. \ref{fig:4042}\subref{fig:42} show the average rate performance of overlay and underlay, respectively. The benchmark curve for \textit{No D2D} is almost indistinguishable from the cellular curve with $\eta = 0.1$ in Fig. \ref{fig:4042}\subref{fig:40} (resp. the cellular curve with $\beta = 0.1$ in Fig. \ref{fig:4042}\subref{fig:42}). For clarity, we do not plot the benchmark curves.
}
\label{fig:4042}
\end{figure}

It can be observed from Fig. \ref{fig:4042} that the rate of potential D2D UEs in overlay increases almost linearly as $\eta$ increases. In contrast,  as $\beta$ increases, the rate of potential D2D UEs in underlay increases in a diminishing way. This is because as $\beta$ increases,  the interference from cellular transmissions and the mutual interference of D2D links increase and thus the received SINR  degrades. Meanwhile, as $\eta$ (resp. $\beta$) increases, the rate of cellular UEs decreases due to the less spectrum resource (resp. more D2D interference). Further, the rate performance of cellular UEs is relatively sensitive to $\eta$ in overlay but is robust to $\beta$ in the underlay.

Recall that the higher the $\snr_m$, the higher the transmit powers. As $\snr_m$ increases from $-4$ dB to $10$ dB,  the rate of cellular UEs in both overlay and underlay and the rate of potential D2D UEs in overlay increase linearly, implying that the transmit powers are not high enough to make the performance interference-limited. In contrast, the rate of potential D2D UEs in underlay increases in a diminishing way, especially when $\snr_m$ exceeds $4$ dB,  implying that the performance is gradually limited by the interference caused by the increased transmit powers.

We summarize the main lessons drawn from the above discussion as follows. In underlay, the rate of potential D2D UEs is more resource-limited; a linear increase in the spectrum resource can generally lead to a linear rate increase. In contrast, the rate of potential D2D UEs is more interference-limited, mainly due to the cochannel cellular interference. As for the rate of cellular UEs, it is  sensitive to the reduction of spectrum in overlay but is more robust to the cochannel D2D interference in underlay.


\section{Conclusions and Future Work}
\label{sec:concludsions}

In this paper we have jointly studied D2D spectrum sharing and mode selection using a hybrid network model and a unified analytical approach. Two scenarios, overlay and underlay, have been investigated. 
Though spectrum sharing has been mainly studied from a rate perspective, we also show in the underlay case how to apply the derived results to study spectrum sharing from a coverage perspective, leading to the discovery of the tradeoff between the underlay D2D spectrum access and mode selection.

The ground cellular network studied in this paper is traditional cellular architecture consisting of only tower-mounted macro BSs. It is interesting to extend this work to (possibly multi-band) heterogeneous networks \cite{andrews2013seven, lin2012modeling} consisting of different types of lower power nodes besides macro BSs. This work can be further extended in a number of ways. At the PHY layer, it would be of interest to study the impact of multiple antenna techniques and/or extend the point-to-point D2D communication to group communication and broadcasting as they are important D2D use cases. At the MAC layer, it would be useful to explore efficient D2D scheduling mechanisms. At the network layer, our work can be used as a stepstone for a wide range of interesting topics including multi-hop and cooperative D2D communications.

\section*{Acknowledgment}

The authors thank Editor Nikos Sagias and the anonymous reviewers for their valuable comments and suggestions.

\appendix

\subsection{Proof of Lemma \ref{pro:1}}
\label{proof:pro:1}

Recall that we adopt an approximate approach on the uplink analysis. In particular, we approximate the coverage region of a hexagonal macrocell as a ball with radius $R = \sqrt{\frac{1}{\pi \lambda_{\textrm{b}}}}$ and assume that the typical active cellular transmitter is uniformly distributed in $\mathcal{A}$. Thus, 
$
\mathbb{P} ( L_\textrm{c} \leq x ) = x^2/R^2, 0 \leq x \leq R.
$
Taking the derivative and using $R = \sqrt{\frac{1}{\pi \lambda_{\textrm{b}}}}$, we obtain that the PDF of the length of  a typical cellular link is
$
f_{L_\textrm{c}} (x) = 2\pi \lambda_{\textrm{b}}  x \cdot \bbbb I_{x  \in [0, \frac{1}{\sqrt{\pi \lambda_{\textrm{b}}}  }  ] }.
$
Then the average transmit power of a cellular transmitter equals
\begin{align}
\bbbb E[ P_{\textrm{c}} ] = \bbbb E[ L^{\alpha}_{\textrm{c}} ] &= \int_{0}^{1/\sqrt{\pi \lambda_\textrm{b}}} 2 \pi \lambda_{\textrm{b}} x^{\alpha+1} \dint x \notag \\
&= \frac{1}{(1+\frac{\alpha}{2})\pi^{\frac{\alpha}{2}} \lambda_\textrm{b}^{\frac{\alpha}{2}}}.  
\label{eq:app:02} 
\end{align}

The PDF of the length of a typical D2D link  is
\[ f_{L_\textrm{d} | D < \mu} (x) = \left\{ \begin{array}{ll}
       \displaystyle \frac{ f_D (x) }{ \bbbb P(D < \mu ) } \!=\! \frac{2 \pi \xi x e^{-\xi \pi x^2}}{ 1 - e^{-\xi \pi \mu^2} } & \mbox{if $x  \in [0, \mu ) $};\\
        0 & \mbox{otherwise}.\end{array} \right. \] 
Correspondingly, its $\alpha$-th moment can be computed as follows:
\begin{align}
\bbbb E[ L^{\alpha}_{\textrm{d}} | D < \mu ] 
=& \frac{1}{  \bbbb P(D < \mu ) } \int_{0}^{\mu}  2 \pi \xi x^{\alpha+1} e^{-\xi \pi x^2}  \dint x \notag \\
=& \frac{(\xi \pi)^{-\frac{\alpha}{2}} }{1 - e^{-\xi \pi \mu^2}} \gamma (\frac{\alpha}{2} +1, \xi \pi \mu^2  ),
\end{align}       
where we have used $\bbbb P(D < \mu ) = 1 - e^{-\xi \pi \mu^2}$ in the last equality. By definition,
\begin{align}
\bbbb E[ P_{\textrm{d}} ] &= \bbbb P( D \geq \mu ) \bbbb E[ L^{\alpha}_{\textrm{c}} ] + (1 - \bbbb P( D \geq \mu )) \bbbb E[ L^{\alpha}_{\textrm{d}} | D < \mu ]  . \notag 
\end{align}
Substituting $\bbbb E[ L^{\alpha}_{\textrm{c}} ]$ and $\bbbb E[ L^{\alpha}_{\textrm{d}} | D < \mu ] $ into the above equation completes the proof.

%

\subsection{Proof of Proposition \ref{pro:2}}
\label{proof:pro:2}

As in the proof of Lemma \ref{pro:1}, the average transmit power of a potential D2D UE equals $\bbbb E[P_{\textrm{c}} ]$ conditioned on cellular mode is used; otherwise, i.e., conditioned on D2D mode is used, it equals $\bbbb E[\hat{P}_{\textrm{d}} ]$. 
Conditioning on D2D mode, the D2D link length $L_{\textrm{d}}$ is distributed as $\frac{1}{\mathbb{P}(D < \mu)} f_D(x), 0\leq L_{\textrm{d}} < \mu$. It follows that $\mathbb{E}[\hat{P}_{\textrm{d}} ] = \mathbb{E}[L^{\alpha}_{\textrm{d}} | D < \mu ] = \int_{0}^{\mu} x^{\alpha} \frac{f_D(x)}{ \mathbb{P}(D < \mu)} \dint x$ and thus
\begin{align}
&\bbbb E[ P_{\textrm{d}} ] =  ( 1 - \int_{0}^{\mu} f_D(x) \dint x )  \cdot  \bbbb E[ P_{\textrm{c}} ] \notag \\
&\quad \quad \quad \quad + \int_{0}^{\mu} f_D(x) \dint x \cdot \int_{0}^{\mu} x^{\alpha} \frac{f_D(x)}{ \mathbb{P}(D < \mu)} \dint x \notag \\
&= ( 1 - \int_{0}^{\mu} f_D(x) \dint x ) \cdot \bbbb E[ P_{\textrm{c}} ] + \int_{0}^{\mu} x^{\alpha} f_D(x) \dint x.  
\end{align}
Taking the derivative of $\bbbb E[ P_{\textrm{d}}]$ with respect to $\mu$,
$
\frac{d }{d \mu }  \bbbb E[ P_{\textrm{d}}] = f_D (\mu) ( \mu^{\alpha}  - \bbbb E[ P_{\textrm{c}}]).  
$
Setting the derivative to zero, we obtain the stationary point $(\bbbb E[ P_{\textrm{c}} ])^{1/\alpha}$. It is easy to see that $\bbbb E[ P_{\textrm{d}}]$ is decreasing when $\mu \in [0,  (\bbbb E[ P_{\textrm{c}} ])^{1/\alpha})$ and is increasing when $\mu \in [(\bbbb E[ P_{\textrm{c}} ])^{1/\alpha}, \infty)$. Hence, $\bbbb E[ P_{\textrm{d}}]$ is minimized at $\mu^{\star} = (\bbbb E[ P_{\textrm{c}} ])^{1/\alpha}$.  The proof is complete by plugging the explicit expression for $\bbbb E[ P_{\textrm{c}} ]$ (given in (\ref{eq:12})).

\subsection{Proof of Proposition \ref{pro:3}}
\label{proof:pro:3}

Consider the conditional Laplace transform
\begin{align}
&\ccc L_{I_{\textrm{d}}} (s) = \bbbb E[ e^{-s { \sum_{X_i \in \Phi_{\textrm{d}} \setminus \{o\} } \hat{P}_{\textrm{d},i} G_{i} \|X_{i}\|^{-\alpha}   } } | o \in \Phi_{\textrm{d}} ] \notag \\
&= \bbbb E^{!o}[ e^{-s \sum_{X_{i} \in \Phi_{\textrm{d}}   } \hat{P}_{\textrm{d},i} G_{i} \|X_{i}\|^{-\alpha}   } ] \notag \\
&=   \bbbb E [ e^{-s \sum_{X_{i} \in \Phi_{\textrm{d}}   } \hat{P}_{\textrm{d},i} G_{i} \|X_{i}\|^{-\alpha}   } ]  \notag \\
&
= \exp \left(  - 2\pi \kappa \lambda_{\textrm{d}} \int_0^{\infty} (1 -  \mathbb{E} [ \exp ( - s \hat{P}_{\textrm{d}} G r^{-\alpha} ) ]  ) r \dint r  \right) \notag \\
& =  \exp \left(  - \frac{ \pi \kappa \lambda_{\textrm{d}} }{\textrm{sinc} (\frac{2}{\alpha}  ) }   \bbbb E [ \hat{P}_{\textrm{d}}^{ \frac{2}{\alpha} }]  s^{ \frac{2}{\alpha} }  \right),
\end{align}
where $\bbbb E^{!o}[\cdot]$ denotes the expectation with respect to the reduced Palm distribution, the third equality follows from Slivnyak's theorem \cite{stoyan1995stochastic}, the fourth equality is due to the probability generating functional of PPP \cite{baccelli2009stochastic}, and we have used $G\sim \textrm{Exp}(1)$ in the last equality. Note that $\lambda_{\textrm{d}} = q\lambda (1 - e^{-\xi \pi \mu^2} ) $, and
from the proof of Lemma \ref{pro:1},
\begin{align}
 \bbbb E [ \hat{P}_{\textrm{d}}^{ \frac{2}{\alpha} }] =  \bbbb E [ L_{\textrm{d}}^{ 2 }] = \frac{1}{\xi \pi} - \frac{ \mu^2e^{-\xi \pi \mu^2}}{1-e^{-\xi \pi \mu^2}}.
\end{align}
Plugging $\lambda_{\textrm{d}}$ and $ \bbbb E [ \hat{P}_{\textrm{d}}^{ \frac{2}{\alpha} }]$ into $\ccc L_{I_{\textrm{d}}} (s)$ yields
$
\ccc L_{I_{\textrm{d}}} (s) = e^{ - c  s^{ \frac{2}{\alpha} }  },
$ where $c$ is given in Prop. \ref{pro:3}. Invoking (\ref{eq:13}) yields the spectral efficiency $R_{\textrm{d}}= \kappa  \bbbb E [ \log (1+\sinr) ] $ of D2D links. 

The CCDF of the SINR of D2D links can be obtained as follows:
\begin{align}
\bbbb P ( \sinr \geq x ) &= \bbbb P \left( G_o \geq x(I_{\textrm{d}} + N_0) \right) \notag \\
&= \bbbb E[ e^{ -x(I_{\textrm{d}} + N_0) } ] = e^{ -x N_0 } \ccc L_{I_{\textrm{d}}} (x) .
\label{eq:comp}
\end{align}
Plugging $\ccc L_{I_{\textrm{d}}} (x)$ into (\ref{eq:comp}) completes the proof.

\subsection{Proof of Proposition \ref{pro:4}}
\label{proof:pro:4}

The spectral efficiency $R_{\textrm{c}}$ of cellular links is given by
\begin{align}
R_{\textrm{c}} &= \bbbb E^o [ \frac{1}{N} \log (1+\sinr) ] \notag \\
&= \bbbb E^o [ \frac{1}{N} ] \cdot \bbbb E [ \log (1+\sinr) ],
\label{eq:app1}
\end{align}
where $N$ is the random number of potential uplink transmitters located in the cell. Due to the PPP assumption, $N$ is a Poisson random variable with parameter $\lambda_{\textrm{c}}/\lambda_{\textrm{b}}$. Denoting by $\tilde{N}$ the number of \textit{other} potential uplink transmitters located in the cell except the one under consideration, i.e., $N = 1+ \tilde{N}$. Thus,
\begin{align}
\bbbb E^o [ \frac{1}{N} ] &= \sum_{n=1}^{\infty} \frac{1}{n} \cdot \bbbb P^o ( N = n ) = \sum_{n=1}^{\infty} \frac{1}{n} \cdot \bbbb P^o ( \tilde{N} = n - 1 )   \notag \\
 &= \sum_{n=1}^{\infty} \frac{1}{n} \cdot  \frac{ (\frac{\lambda_{\textrm{c}}}{\lambda_{\textrm{b}}})^{n-1} }{(n-1)!} e^{-\frac{\lambda_{\textrm{c}}}{\lambda_{\textrm{b}}}} 
= \frac{\lambda_{\textrm{b}}}{\lambda_{\textrm{c}}  } \sum_{n=1}^{\infty} \frac{ (\frac{\lambda_{\textrm{c}}}{\lambda_{\textrm{b}}})^{n} }{n!} e^{-\frac{\lambda_{\textrm{c}}}{\lambda_{\textrm{b}}}} \notag \\ 
&=  \frac{\lambda_{\textrm{b}}}{\lambda_{\textrm{c}} } ( 1 - e^{-\frac{\lambda_{\textrm{c}}}{\lambda_{\textrm{b}}}} ),
\label{eq:app2}
\end{align}
where the third equality is due to Slivnyak's theorem \cite{stoyan1995stochastic}: conditioning on the transmitter under consideration, the other potential uplink transmitters are still PPP distributed and thus $\tilde{N} \sim \textrm{Poisson}( \lambda_{\textrm{c}}/\lambda_{\textrm{b}} )$ under the Palm probability $\mathbb{P}^o$.

Next we calculate $\bbbb E [ \log (1+\sinr) ]$. To this end, we first calculate the Laplace transform
\begin{align}
&\mathcal{L}_{I_{\textrm{c}}} (s) = \bbbb E[ e^{ - s \sum_{X_i \in \Phi_{\textrm{c,a}} \cap \mathcal{A}^{c} } P_{\textrm{c},i} G_{i} \|X_{i}\|^{-\alpha}} ] \notag \\
&= \bbbb E[ e^{ - s \sum_{X_i \in \Phi_{\textrm{c,a}} } P_{\textrm{c},i} G_{i} \|X_{i}\|^{-\alpha} \bbbb I_{ \{ \|X_{i}\| \geq R \} } } ] \notag \\
&= \bbbb E[ \prod_{X_i \in \Phi_{\textrm{c,a}} } e^{ - s  P_{\textrm{c},i} G_{i} \|X_{i}\|^{-\alpha} \bbbb I_{ \{\|X_{i}\| \geq R \} } } ] \notag \\
&= \exp \left( - \int_0^{2\pi} \int_0^\infty 1 - \bbbb E[ e^{ - s  P_{\textrm{c}} G r^{-\alpha} \bbbb I_{ \{r \geq R\} } } ]  \lambda_{\textrm{b}} r\dint r \dint \theta \right)  \notag \\
&= \exp \left( - 2\pi \lambda_{\textrm{b}} \int_{R}^\infty (1 - \bbbb E[ e^{ - s  P_{\textrm{c}} G r^{-\alpha} } ] )  r\dint r \right) \notag \\
&= \exp \left( - 2\pi \lambda_{\textrm{b}} \int_{R}^\infty (1 - \bbbb E[ e^{ - s  L^{\alpha}_{\textrm{c}} G r^{-\alpha} } ] )  r\dint r \right),  
\end{align}
where we have used the probability generating functional of PPP in the fourth equality, and $P_{\textrm{c}} = L^{\alpha}_{\textrm{c}}$ in the last equality. Using that 
$L_{\textrm{c}}$ is distributed as $f_{L_\textrm{c}} (x) = 2\pi \lambda_{\textrm{b}} x \mathbb{I} (x \in [0, 1/\sqrt{\pi \lambda_{\textrm{b}}}] )$, and $G \sim \textrm{Exp} (1)$, 
\begin{align}
\bbbb E_{G,L_\textrm{c}}[ e^{ - x  L^{\alpha}_{\textrm{c}} G r^{-\alpha} } ] &= \bbbb E_{L_\textrm{c}} \left[ \frac{1}{ 1 +  x   L^{\alpha}_{\textrm{c}} r^{-\alpha} }  \right] \notag \\
&=  2\pi \lambda_{\textrm{b}}\int_0^{\sqrt{\frac{1}{\pi \lambda_{\textrm{b}}}}} \frac{t}{ 1 + x r^{-\alpha} t^{\alpha} }  \dint t 
\notag \\
&= {}_2F_1 (1, \frac{2}{\alpha}; 1+\frac{2}{\alpha}; -  \frac{x}{(r\sqrt{\pi \lambda_{\textrm{b}}})^{\alpha}}   ) .\notag 
\end{align}
Plugging $\mathcal{L}_{I_{\textrm{c}}} (s)$ into (\ref{eq:13}) yields 
\begin{align}
&\bbbb E [  \log (1+\sinr) ] = \int_0^\infty   \frac{ e^{ - N_0 x}}{1 + x}  \exp \bigg( - 2\pi \lambda_{\textrm{b}} \notag \\
&\times \int_{R}^\infty \!\! (1 - {}_2F_1 (1, \frac{2}{\alpha}; 1+\frac{2}{\alpha}; -  \frac{x}{(r\sqrt{\pi \lambda_{\textrm{b}}})^{\alpha}}   ) )  r\dint r \bigg)   \dint x .
\label{eq:app3}
\end{align}
Plugging (\ref{eq:app2}) and (\ref{eq:app3}) into (\ref{eq:app1}) gives the spectral efficiency $R_{\textrm{c}}$ of cellular links. The CCDF of the SINR of cellular links can be similarly obtained as in (\ref{eq:comp}).

\subsection{Proof of Corollary \ref{cor:2}}
\label{proof:cro:2}

For a sparse cellular network with small $\lambda_{\textrm{b}}$, $R$ is large and thus for $r\in[R,\infty )$,
\begin{align}
1 - \bbbb E[ e^{ - x   G L^{\alpha}_{\textrm{c}} r^{-\alpha} } ] &\approx x r^{-\alpha}   \bbbb E[ G L^{\alpha}_{\textrm{c}}  ] 
=  x r^{-\alpha}    \bbbb E[ G ] \bbbb E [ P_{\textrm{c}} ] \notag \\
&=   \frac{x r^{-\alpha}   }{ (\frac{\alpha}{2}+1) (\pi \lambda_{\textrm{b}} )^{\frac{\alpha}{2}} },
\end{align}
where we have used the approximation $1 - e^{-y} \approx y$ for small value $y$, the independence of fading $G$ and link length $L_{\textrm{c}}$ and $\bbbb E [ P_{\textrm{c}} ] = \bbbb E[ L^{\alpha}_{\textrm{c}}  ]$ in the first equality, and have plugged in $\bbbb E[ G ] = 1$ and $\bbbb E[ L^{\alpha}_{\textrm{c}}  ] = { 1 }/{ (\frac{\alpha}{2}+1) (\pi \lambda_{\textrm{b}} )^{\frac{\alpha}{2}} }$ (c.f. Lemma \ref{pro:1}) in the last equality.
Accordingly, the Laplace transform of the uplink interference is given by
\begin{align}
\mathcal{L}_{I_{\textrm{c}}} (s)   &= \exp \left( - 2\pi \lambda_{\textrm{b}} \int_{R}^\infty (1 - \bbbb E[ e^{ - s   G L^{\alpha}_{\textrm{c}} r^{-\alpha} } ] )  r\dint r \right) \notag \\
&\approx \exp \left( - 2\pi \lambda_{\textrm{b}} s \int_{R}^\infty \frac{r^{-\alpha+1}  }{ (\frac{\alpha}{2}+1) (\pi \lambda_{\textrm{b}} )^{\frac{\alpha}{2}} }  \dint r \right) \notag \\
&= \exp \left( - \frac{4}{\alpha^2 - 4} s \right). 
\end{align}

For a dense network with large $\lambda_{\textrm{b}}$, $R$ is small and thus
\begin{align}
\mathcal{L}_{I_{\textrm{c}}} (s)  &\approx \exp \left( - 2\pi \lambda_{\textrm{b}} \int_{0}^\infty (1 - \bbbb E[ e^{ - s   G L^{\alpha}_{\textrm{c}} r^{-\alpha} } ] )  r\dint r \right) \notag \\
&= \exp \left( - \frac{ \pi \lambda_{\textrm{b}} }{\textrm{sinc} (\frac{2}{\alpha}  ) }  \bbbb E [ {P}_{\textrm{c}}^{ \frac{2}{\alpha} }]  s^{ \frac{2}{\alpha} }  \right) \notag \\
&= \exp \left( - \frac{ 1 }{2 \textrm{sinc} (\frac{2}{\alpha}  ) }   s^{ \frac{2}{\alpha} }  \right) , \label{eq:17}
\end{align}
where we have plugged in $\bbbb E [ {P}_{\textrm{c}}^{ \frac{2}{\alpha} }] = \bbbb E [ L_{\textrm{c}}^{ 2 }] = \frac{1}{2\pi \lambda_{\textrm{b}}}$ (c.f. (\ref{eq:app:02})) in the last equality. Combining the above asymptotic results with Prop. \ref{pro:4} completes the proof.

\subsection{Proof of Proposition \ref{pro:8}}
\label{proof:pro:8}
Note that the average D2D transmit power $\bbbb E[\hat{P}_{\textrm{d}}]$ here is only $1/\beta B$ of the one given in Lemma \ref{pro:1} as each D2D transmitter accesses $\beta B$ subchannels and needs to split its power accordingly, i.e.,
$
\hat{P}_{\textrm{d}} = \frac{1}{\beta B} L_{\textrm{d}}^{\alpha}.
$
Hence, 
$\bbbb E[\hat{P}^{\frac{2}{\alpha}}_{\textrm{d}}] = (\frac{1}{\beta B})^{\frac{2}{\alpha}} \bbbb E[ L_{\textrm{d}}^2 ]$. The corresponding spectral efficiency (normalized by bandwidth $B$) is given by
\begin{align}
R_{\textrm{d}}  = & \kappa  \bbbb E [ \log (1+\sinr) ] 
=   \kappa \int_0^\infty   \frac{ e^{ - N_0  x}}{1 + x}  \mathcal{L}_{I_{\textrm{d}}} (\beta B x)   \dint x  \notag \\
= &  \kappa \int_0^\infty   \frac{ e^{ - N_0  x}}{1 + x}    \exp \bigg( - \frac{ \pi \kappa \lambda_{\textrm{d}} }{\textrm{sinc} (\frac{2}{\alpha}  ) }  \bbbb E [ \hat{P}_{\textrm{d}}^{ \frac{2}{\alpha} }]  (\beta B x)^{ \frac{2}{\alpha} } \bigg)   \notag \\   
&\quad \times \exp \bigg(  - \frac{ \pi \lambda_{\textrm{b}} }{\textrm{sinc} (\frac{2}{\alpha}  ) }  \bbbb E [ P_{\textrm{c}}^{ \frac{2}{\alpha} }]  (\beta B x)^{ \frac{2}{\alpha} } \bigg)   \dint x .
\label{eq:app:03}
\end{align}
Here $\lambda_{\textrm{d}} = \beta \cdot q\lambda (1 - e^{-\xi \pi \mu^2} ) $, the density of D2D transmitters  that is ``seen'' from each subchannel, and
from the proof in Appendix \ref{proof:pro:1},
\begin{align}
 \bbbb E [ \hat{P}_{\textrm{d}}^{ \frac{2}{\alpha} }] &= (\frac{1}{\beta B})^{\frac{2}{\alpha}} \bbbb E [ L_{\textrm{d}}^{ 2 }] = (\frac{1}{\beta B})^{\frac{2}{\alpha}} \left( \frac{1}{\xi \pi} - \frac{ \mu^2e^{-\xi \pi \mu^2}}{1-e^{-\xi \pi \mu^2}} \right)  \notag \\
\bbbb E [ P_{\textrm{c}}^{ \frac{2}{\alpha} }] &= (\frac{1}{B})^{\frac{2}{\alpha}} \bbbb E [ L_{\textrm{c}}^{ 2 }] = (\frac{1}{B})^{\frac{2}{\alpha}} \frac{1}{2 \pi \lambda_{\textrm{b}}}. \notag
\end{align}
Plugging $\lambda_{\textrm{d}}$, $ \bbbb E [ \hat{P}_{\textrm{d}}^{ \frac{2}{\alpha} }]$ and $\bbbb E [ P_{\textrm{c}}^{ \frac{2}{\alpha} }]$ into (\ref{eq:app:03}) establishes the expression for the D2D link spectral efficiency. The CCDF of the SINR of D2D links can be similarly obtained as in (\ref{eq:comp}).

\subsection{Proof of Proposition \ref{pro:9}}
\label{proof:pro:9}

As in the proof of Prop. \ref{pro:8}, the Laplace transform of the interference can be calculated as follows.
\begin{align}
&\mathcal{L}_{I_{\textrm{c}}}  ( s ) =  \exp \bigg( - \frac{ \pi \kappa \lambda_{\textrm{d}} }{\textrm{sinc} (\frac{2}{\alpha}  ) }  \bbbb E [ \hat{P}_{\textrm{d}}^{ \frac{2}{\alpha} }]  s^{ \frac{2}{\alpha} } - 2\pi \lambda_{\textrm{b}}   \notag \\
& \times  \int_{R}^\infty \left(1 - {}_2F_1 (1, \frac{2}{\alpha}; 1+\frac{2}{\alpha}; -  \frac{s/B}{(r\sqrt{\pi \lambda_{\textrm{b}}})^{\alpha}}   ) \right)  r\dint r \bigg) .
\label{eq:app:04}
\end{align}
Then as in the proof of Prop. \ref{pro:4}, the spectral efficiency $R_{\textrm{c}}$ of cellular links is given by
\begin{align}
R_{\textrm{c}} &= \bbbb E^o [ \frac{1}{N} \log (1+\sinr) ] \notag \\
&
= \frac{\lambda_{\textrm{b}}}{\lambda_{\textrm{c}}} ( 1 - e^{-\frac{\lambda_{\textrm{c}}}{\lambda_{\textrm{b}}}} ) \int_0^{\infty} \frac{e^{-N_0 x}}{1+x}  \mathcal{L}_{I_{\textrm{c}}} ( B x ) \dint x .
\end{align}
Plugging $\mathcal{L}_{I_{\textrm{c}}} ( s )$ (\ref{eq:app:04}) into the above equation establishes the expression for the cellular link spectral efficiency. The CCDF of the SINR of D2D links can be similarly  obtained  as in (\ref{eq:comp}).

\bibliographystyle{IEEEtran}
\bibliography{IEEEabrv,Reference}


\end{document}